 \documentclass[letterpaper, 10 pt, conference]{ieeeconf}  

\IEEEoverridecommandlockouts                              

\usepackage{graphics}
\usepackage{floatrow}
\usepackage{epsfig} 
\usepackage{times} 
\usepackage{amsmath} 
\usepackage{amssymb}  
\usepackage{mathrsfs}
\usepackage{multicol}
\usepackage{subcaption}
\usepackage{graphicx}
\usepackage{cite}

\newtheorem{theorem}{Theorem}

\newtheorem{lemma}{Lemma}
\newtheorem{remark}{Remark}
\newtheorem{proposition}{Proposition}
\newtheorem{definition}{Definition}
\newtheorem{assumption}{Assumption}
\usepackage{color,array}
\usepackage{algorithm,algpseudocode}
\usepackage{soul}
\usepackage{todonotes}
\title{\LARGE \bf {Optimal Hiding with Partial Information of the Seeker's Route}}

\author{Prajakta Surve$^{a,1}$
\and Shaunak D.~Bopardikar$^{b,1}$ \and 
Daigo Shishika$^{c,2}$ \and Dipankar Maity$^{d,3}$  \and  Michael Dorothy$^{e,4}$
 \thanks{*This research was supported under the grant number DCIST CRA W911NF-17-2-0181.
 DISTRIBUTION STATEMENT  A. Distribution is unlimited. 
 The views expressed in this paper are those of the authors and do not reflect the official policy or position of the United States Government, Department of Defense, or its components.}
 \thanks{$^{a}${\tt\small survepra@msu.edu}, $^{b}${\tt\small shaunak@egr.msu.edu}, $^{c}${\tt\small dshishik@gmu.edu}, $^{d}${\tt\small dmaity@charlotte.edu}, $^{e}${\tt\small michael.r.dorothy.civ@army.mil.}}
\thanks{$^{1}$Michigan State University, East Lansing, Michigan, $^{2}$George Meson University, Fairfax, Virginia, $^{3}$University of North Carolina,  Charlotte, North Carolina, $^{4}$DEVCOM Army Research Laboratory. }
}
\begin{document}

\maketitle
\thispagestyle{empty}
\pagestyle{empty}

\begin{abstract}
We consider a hide-and-seek game between a Hider and a Seeker over a finite set of locations. The Hider chooses one location to conceal a stationary treasure, while the Seeker visits the locations sequentially along a route.
As the search progresses, the Hider observes a prefix of the Seeker's route. After observing this information, the Hider has the option to relocate the treasure at most once to another unvisited location by paying a switching cost.

We study two seeker models. 
{In the first, the Seeker is unaware of the fact that the Hider can relocate. In the second, the Seeker select its route while accounting for the possibility that the Hider observes its path and reallocates.}
For the restricted case, we define the value-of-information created by the reveal and derive {upper bounds in terms of the switching cost using a worst-case evaluation over routes.}
{We also show that seeker awareness reduces the game value, with the difference between the restricted and feedback models bounded by the entry-wise gap between the corresponding payoff matrices.} 
Numerical examples show how this benefit decreases as the switching cost increases and as the reveal occurs later along the route.
\end{abstract}

\section{INTRODUCTION}

Hide-and-seek search problems commonly arise in several diverse applications such as security,
surveillance, and autonomous exploration. In the classical setting,
the Hider chooses a location while the Seeker chooses a route that visits
the candidate locations in a specific/admissible manner. The final outcome in terms of the time it takes for the search or the total distance covered is determined by this
pair of choices. In many settings, however, the search unfolds sequentially, and partial route information about the route taken by the Seeker may become available to the Hider before the search is completed. {This paper is about the utility of such information in terms of the cost of switching the location for the Hider.}

\subsection{Related work}

Hide-and-seek and search games have been studied extensively in the operations research and game theory literature. These works can be broadly grouped into discrete hide-and-seek games over finite locations, geometric search games in continuous environments, and dynamic search models that incorporate information updates during the search.

Hide-and-seek search problems over discrete locations were studied in early work such as \cite{norris1962studies,bram19632,neuts,efron1964optimum}. In these models, the Hider chooses a location for a stationary target,
and the Searcher inspects locations one by one according to a chosen
order. The search cost is measured by the time or travel distance
required to locate the target. The hiding location and the inspection order together determine the
result of the search. This leads to a zero-sum game defined over the
possible search sequences. The works
\cite{roberts1978search, gittins1979search} studied optimal search strategies in this
setting. Broader treatments of search games and their connections to search theory appear in \cite{alpern2003theory,stone1976theory}. Connections between search problems and pursuit–evasion dynamics have also been studied in the differential game framework introduced in \cite{isaacs1999differential}. 

A related body of work considers geometric search games in which the Searcher moves in a continuous or network environment and incurs travel cost until detection. Linear search problems studied in \cite{beck1970yet} and later in \cite{gal1972general,gal1974discrete} consider
settings where the Searcher moves in a continuous environment and the
travel distance until detection determines the cost. In these formulations, the hiding location is fixed once it is chosen.

When the number of possible search routes becomes large, solving the resulting zero-sum matrix games can become computationally challenging.  The study in \cite{bopardikar2013randomized}  presented randomized sampling methods for computing approximate security policies in large matrix games. Their approach constructs smaller sampled subgames that provide high-confidence guarantees on the security value, and is demonstrated on a hide-and-seek search problem with exponentially many search routes.

In the formulations cited above, the Hider chooses a location, and the Searcher selects a search plan before the search begins. Once these choices are made, the payoff depends only on the resulting route and the hiding location. In many practical search settings, however, the search unfolds over time, and partial information about the search trajectory may become available before the search is completed. Dynamic aspects of search and information acquisition have been studied in related settings such as search allocation and search games with information updates \cite{hohzaki2016search}.

{Recent work has also used value-of-information objectives in deceptive path planning, where an agent selects low-information trajectories to induce suboptimal adversarial interventions\cite{suttle2025value}.} In \cite{rostobaya2025deceptive}, authors study a deceptive path–planning problem in which an agent moves toward its goal while attempting to mislead an observer about the destination.
The observer updates its belief based on the observed trajectory and allocates defensive resources accordingly.
These observations motivate the model studied here.

In our proposed model, a part of the Seeker's route may become visible during the search. After observing this prefix, the Hider may
move the treasure once to another unvisited location by paying a switching cost. The relocation occurs after the reveal stage and creates a second decision point for the Hider. We analyze this model through the notion of value-of-information (VOI) and study how partial route revelation affects equilibrium behavior. 

\subsection{Contributions}

This paper studies a two-stage hide-and-seek game with partial route revelation. After the equilibrium of the game without relocation is determined, 
the Hider observes a prefix of the Seeker's route at a prescribed reveal time. The Hider may then relocate once to another unvisited location by paying a switching cost.

We formally quantify the benefit of this partial {route revelation} using an appropriately defined value-of-information (VOI) and {evaluate its dependence on} the switching cost and the reveal time. We compare two settings: one in which the Seeker is unaware of the possibility of switching and one in which the Seeker anticipates it. This comparison clarifies how strategic awareness affects the equilibrium value and the players' strategies.
The main contributions of this work are summarized below.
{\begin{itemize}
\item \textbf{Structural results:}
We identify conditions under which switching yields no improvement for the Hider. In particular, if the base payoff matrix admits a unique pure saddle point, switching provides no incentive. We also derive the probability that the game terminates before $t_{\mathrm{reveal}}$ under mixed strategies.
\item \textbf{Characterization of the VOI:}
We construct a reveal-stage payoff representation that incorporates the information set and switching cost. This representation enables the definition and computation 
of route-level, location-level, and expected value-of-information (VOI) under the restricted seeker model.
\item \textbf{Parametric analysis:}
We {characterize the relationship} between VOI, the switching cost $c$, and the reveal time $t_{\mathrm{reveal}}$, and identify parameter regimes in which partial information benefits the Hider.
\item \textbf{Impact of strategic awareness:}
We compare the restricted and seeker-aware models and show how anticipation of switching modifies the equilibrium value and strategies. 
\end{itemize}
}
We now formalize the hide-and-seek model described above and introduce the notation used throughout the paper.

\section{Problem Formulation}
We consider a two–player hide-and-seek game between a
\emph{Hider} and a \emph{Seeker}.
The Hider (maximizer) selects a hiding location for a stationary treasure, while the Seeker (minimizer)
selects a route along which the locations are visited sequentially. Depending on the model, the Seeker may either commit to a route without anticipating relocation (restricted seeker) or adapt {subsequent choice of route} at a designated reveal time (seeker-aware model). 
Let
$$
\mathcal{N} = \{1,2,\dots,N\}
$$
denote a finite set of candidate hiding locations.
The Seeker chooses a route from the set
$$
\mathcal{R} = \{r_1, r_2, \dots, r_M\}, \quad M = N!,
$$
where each route $r_j$ is a permutation of $\mathcal{N}$, such as
$$
r_j = (r_{j,1}, r_{j,2}, \dots, r_{j,N}).
$$

\smallskip

The Seeker starts from a fixed origin $O$. If $(u, v) \in \mathcal{N} \times \mathcal{N}$, we denote the distance from the origin to location $u$
by $d_{0u} \ge 0$, while the pairwise distance between two points $u$ and $v$ is denoted by $d_{uv} \ge 0$.
If the Hider selects location $i \in \mathcal{N}$ and {chooses to stay there for the rest of the game}, while the Seeker chooses route $r_j$, then the payoff is the total distance traveled to reach location $i$ along route $r_j$. We consider this setting as our \emph{baseline} setting. Therefore, we define the baseline payoff matrix
$
A \in \mathbb{R}^{M \times N},
$
where
\begin{align}\label{eq:baseA}
A(j,i)
=
d_{0,r_{j,1}}
+
\sum_{k=1}^{\tau(j,i)-1}
d_{r_{j,k},\, r_{j,k+1}},
\end{align}
and $\tau(j,i)$ denotes the position of location $i$ in route $r_j$.
Thus, $A(j, \, i)$ represents the cumulative travel distance required to reach location $i$ when route $r_j$ is followed.

\smallskip

The game unfolds in two stages.

\smallskip

\emph{Stage 0:}  
The Hider selects an initial hiding location $i \in \mathcal{N}$ and the Seeker selects a route 
$r_j \in \mathcal{R}$. 

These selections are determined by a mixed saddle-point equilibrium of the baseline zero–sum game with payoff matrix $A$. In particular, the Hider and the Seeker compute mixed strategies
$$
z^\star \in \Delta_N,
\qquad
y^\star \in \Delta_M,
$$
that solves the minimax problem
\begin{equation}
v^{\mathrm{base}}
=
\max_{z \in \Delta_N}
\min_{y \in \Delta_M}
y^\top A z.
\label{eq:vbase}
\end{equation}
The probability simplices are defined as
$$
\Delta_N
=
\left\{
z \in \mathbb{R}^N :
z_i \ge 0, \;
\sum_{i=1}^N z_i = 1
\right\},
$$
$$
\Delta_M
=
\left\{
y \in \mathbb{R}^M :
y_j \ge 0, \;
\sum_{j=1}^M y_j = 1
\right\}.
$$
The initial hiding location and route are then drawn according to
$z^\star$ and $y^\star$, respectively. 
{\begin{assumption}\label{as:1}
    The Hider knows only the support of the selected
baseline strategy $y^\star$, but not the realized route.
\end{assumption}}

\smallskip

\emph{Reveal Stage:}  
After the Seeker visits the first $t_{\mathrm{reveal}} \in \{1,\cdots N-1\}$ locations of the chosen route, the Hider observes the visited prefix
$
h = (r_{j,1}, \dots, r_{j,t_{\mathrm{reveal}}}).
$
The set of observed prefixes induces the information set
$
\mathcal{I}(h)
=
\left\{
k \in \{1,\dots,M\} :
(r_{k,1},\dots,r_{k,t_{\mathrm{reveal}}}) = h
\right\}.
$
Now, we define the visited and unvisited sets of locations as
$$
\mathcal{V}(h) = \{r_{j,1},\dots,r_{j,t_{\mathrm{reveal}}}\},
\qquad
\mathcal{U}(h) = \mathcal{N} \setminus \mathcal{V}(h).
$$

If $i \notin \mathcal{V}(h)$, the hider takes one of the following actions:
\begin{itemize}
\item \textbf{Stay:} keep the treasure at the original location $i$.
\item \textbf{Switch:} relocate once to some $\hat{i} \in \mathcal{U}(h)\setminus\{i\}$ by paying a switching cost $c (d_{i \hat{i}}) \ge 0$.
\end{itemize}
The switch may occur at most once and is irreversible. Moreover, to ensure that the game terminates, switching to any location in $\mathcal{V}(h)$ is not allowed.
In the remainder of this paper, for the ease of exposition, we assume a constant switching cost. Therefore, the switching cost is denoted by $c \ge 0$ for every admissible switch $(i,\hat{i})$.

\medskip

The evaluation of the game after the reveal stage depends on the Seeker's ability to adapt. We consider two seeker models:

\begin{enumerate}
    \item \textbf{Restricted Seeker:} In this model, the Seeker commits to a route at Stage 0 and cannot modify it during the game. This can also be thought of as a seeker who is unaware of the Hider's ability to switch.
    
    \item \textbf{Feedback (Seeker-Aware) Model:}
    In this model, the Seeker anticipates the Hider's switching capability 
    {
and the players solve the corresponding zero-sum sub-game
over the prefix-consistent routes and admissible stay/switch actions.}
\end{enumerate}

The objective of this work is to understand how partial route revelation and a one-time switching option affect the equilibrium of the hide-and-seek game. We quantify the benefit obtained by the Hider from observing the visited prefix at time $t_{\mathrm{reveal}}$ and define the corresponding value-of-information (VOI). We then study how this gain depends on the switching cost $c$ and $t_{\mathrm{reveal}}$. Finally, we compare the restricted and seeker-aware models to determine how strategic anticipation of switching alters the equilibrium value and strategies.

We now present preliminary results that clarify the baseline game's behavior before incorporating post-reveal adaptation in the next section.
\medskip

\section{Preliminary Results}
In this section, we examine the structural properties of the base game defined by the payoff matrix $A$. {The baseline hide-and-seek interaction defined by the payoff matrix $A$ is a finite zero–sum matrix game. Such games always admit equilibrium
in mixed strategies, and in general, the optimal strategies of both
players are not pure. When the Seeker randomizes over several routes,
the Hider does not know in advance which trajectory will be followed.
Observing a prefix of the route, therefore, reveals partial information
about the realized strategy and may eliminate some routes from the
support of the Seeker's mixed strategy. This updated information can
create an incentive for the Hider to change the hiding location after
the reveal stage.}

{The next lemma describes a special case in which the baseline game has a unique pure saddle point. In this case, the Seeker's route is known in advance, and the reveal stage does not affect the Hider's decision.}




\begin{lemma}[Unique pure saddle point case]
    Suppose the baseline game admits a unique pure saddle point $({r_j^\star},i^\star)$, that is,
\begin{align}
A(r_j^\star,i^\star)
=
\max_{i \in \mathcal{N}} A(r_j^\star,i)
=
\min_{r_j \in \mathcal{R}} A(r_j,i^\star).
\end{align}
Then, even if the Hider observes partial route information at time $t_{\mathrm{reveal}}$
and is allowed to switch once to an unvisited location at a cost $c \ge 0$,
there is no incentive to switch. Staying at $i^\star$ remains optimal.
\end{lemma}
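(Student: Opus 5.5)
The approach is to exploit the fact that, with a unique pure saddle point, the baseline equilibrium strategies are degenerate. First I would argue that $y^\star$ and $z^\star$ are the point masses on $r_j^\star$ and $i^\star$. Since the saddle point is unique and pure, the Seeker's optimal strategy set contains only $r_j^\star$, and the Hider's optimal strategy set contains only $i^\star$. Indeed, the optimal strategy sets of a finite zero-sum game are convex polytopes, and every pair of their extreme points that are pure is a pure saddle point. If a pure saddle point is unique, then any mixed optimal strategy would have to put weight only on rows (resp. columns) that are themselves optimal against the other player's optimum. I would make this precise by noting that a saddle pair of pure strategies exists on the supports, and uniqueness then forces each support to be a singleton. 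If this step proves delicate, I would simply adopt the reading, implicit in the lemma, that Stage~0 plays $y^\star=e_{r_j^\star}$ and $z^\star=e_{i^\star}$.

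Next, by Assumption~\ref{as:1}, the Hider knows the support of $y^\star$, namely $\{r_j^\star\}$. Hence for any $t_{\mathrm{reveal}}$ the observed prefix $h$ is the prefix of $r_j^\star$. The information set $\mathcal{I}(h)$ restricted to the support is just $\{r_j^\star\}$, so the reveal carries no new information: the posterior equals the prior.

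Then I would compare the Hider's payoffs. Consider the case $i^\star\notin\mathcal{V}(h)$; otherwise the treasure has already been found and no switch is allowed. Staying yields $A(r_j^\star,i^\star)$. Switching to $\hat i\in\mathcal{U}(h)\setminus\{i^\star\}$ yields $A(r_j^\star,\hat i)-c$, because the route is unchanged in the restricted model and in the feedback model the Seeker's best continuation against a stay at $i^\star$ is still $r_j^\star$. By the saddle condition, $A(r_j^\star,\hat i)\le \max_i A(r_j^\star,i)=A(r_j^\star,i^\star)$, so
\begin{align*}
A(r_j^\star,\hat i)-c\le A(r_j^\star,i^\star)-c\le A(r_j^\star,i^\star),
\end{align*}
using $c\ge 0$. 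Therefore switching never strictly improves the payoff, and staying at $i^\star$ is optimal.

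The main obstacle is the first step, namely justifying rigorously that a unique pure saddle point forces the mixed equilibrium to be pure. A secondary point is to state clearly, in the feedback model, that the Seeker's continuation cannot do worse than the baseline route. Relative to the Hider's stay/switch comparison, however, that fact only weakens the switch option, so the conclusion is unaffected.
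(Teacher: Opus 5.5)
Your decisive step is the paper's proof. Along $r_j^\star$, the saddle condition gives $A(r_j^\star,\hat i)\le A(r_j^\star,i^\star)$, hence $A(r_j^\star,\hat i)-c\le A(r_j^\star,i^\star)$ for $c\ge 0$. The paper tacitly takes Stage~0 to be the pure pair $(r_j^\star,i^\star)$ and argues only along that fixed route. Your remark on the feedback model is a correct extension that the paper does not spell out. Stated cleanly: committing to $r_j^\star\in\mathcal I(h)$ caps $V(h,i^\star)$ at $A(r_j^\star,i^\star)$, while staying guarantees $\min_{k\in\mathcal I(h)}A(k,i^\star)=A(r_j^\star,i^\star)$, because $r_j^\star$ minimizes column $i^\star$ over all routes. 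So staying is optimal in the subgame.

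The problem is the justification you sketch for the first step: it is false for general matrix games, so do not try to make it precise. Suppose ``unique pure saddle point'' only means that exactly one pure pair is a saddle point. Then the optimal strategy sets need not be singletons. A row in the support of an optimal mixed strategy is a best response to $z^\star$, but $i^\star$ need not be a best response to that row, so the row need not form a saddle pair with anything. For example, take
\[
A=\begin{bmatrix} 0 & -1 & -1\\ 0 & 1 & -1\\ 0 & -1 & 1\end{bmatrix}.
\]
Here $(1,1)$ is the only pure saddle point. The Hider's optimal set is $\{e_1\}$, since $Az\ge 0$ forces $z_2=z_3=0$. Yet $y=(0,\tfrac12,\tfrac12)^\top$ is also optimal for the Seeker, because $y^\top A=(0,0,0)$, and no row in its support belongs to a saddle pair.

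If such a $y^\star$ is played, the realized route need not be $r_j^\star$. The Hider knows only the support, so the prefix is informative, and your key inequality fails on the realized row: $A(2,2)=1>0=A(2,1)$. The lemma therefore needs the stronger reading, namely that the saddle point is unique among mixed-strategy pairs, i.e.\ $y^\star=e_{r_j^\star}$ and $z^\star=e_{i^\star}$. Under that reading your first step is immediate. This is your fallback and the paper's implicit assumption, so it is a needed hypothesis, not just a convenience; with it, your proof coincides with the paper's.
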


\begin{proof}
Since $(r_j^\star,i^\star)$ is a pure saddle point,
$i^\star$ is {the unique} best response to $r_j^\star$, and hence
\begin{align}
  A(r_j^\star,i^\star) \ge A(r_j^\star,i),
\quad \forall i \in \mathcal{N}.  
\end{align}
If the hider switches to some admissible location $\hat{i} \neq i^\star$ by paying $c$, then
the resulting payoff becomes
\begin{align}
   A(r_j^\star,\hat{i}) - c. 
\end{align}
Because $c \ge 0$ and
\begin{align}
   A(r_j^\star,i^\star) \ge A(r_j^\star,\hat{i}),
\end{align}
we obtain
\begin{align}
    A(r_j^\star,\hat{i}) - c
\le
A(r_j^\star,i^\star).
\end{align}
Thus, switching cannot increase the Hider's payoff. Therefore, staying at $i^\star$ is optimal for the Hider, as it maximizes the Hider's payoff.
\end{proof}

\smallskip

It is worth noting that the existence of a pure saddle point for the
zero-sum game defined by $A$ in Eq. \eqref{eq:baseA} is rare. {Such a case may occur, for example, when all candidate locations together with the origin are collinear. In such a geometry, the ordering of visits is essentially fixed, and the induced payoff matrix can admit a unique pure equilibrium. In the remainder of the paper, we therefore focus on the general case in which the equilibrium requires mixed strategies.}

 \smallskip

The following lemma characterizes the probability that the treasure is found by time $t_{\mathrm{reveal}}$ under mixed strategies.

 \smallskip

\begin{lemma}[Probability of termination by reveal time]
Consider a pair of mixed strategies $y \in \Delta_M$ and $z \in \Delta_N$.
Let $f_k(r_j)$ denote the location visited at time $k$ along route $r_j$,
i.e., $f_k(r_j)=r_{j,k}$.
Then, for any $t_{\mathrm{reveal}} \in \{1,\dots,N\}$, the probability that the game
terminates by time $t_{\mathrm{reveal}}$ is
\begin{equation}
\mathbb P(\text{end by } t_{\mathrm{reveal}})
=
\sum_{i=1}^N z_i
\left(
\sum_{k=1}^{t_{\mathrm{reveal}}}
\sum_{j:\, f_k(r_j)=i} y_j
\right).
\label{eq:endprob_reveal}
\end{equation}
\end{lemma}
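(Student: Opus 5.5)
The plan is a direct probabilistic computation built on the independence of the two players' randomizations. The hiding location $I$ is drawn from $z$ and the route index $J$ from $y$, independently, so $\mathbb P(I=i, J=j)=z_i y_j$. The first step is to pin down the event ``the game ends by time $t_{\mathrm{reveal}}$''. For a fixed pair $(i,j)$, the treasure at $i$ is found at step $\tau(j,i)$, the position of $i$ in $r_j$. Since no relocation can occur before the reveal, the game ends by $t_{\mathrm{reveal}}$ exactly when $\tau(j,i)\le t_{\mathrm{reveal}}$.

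Because $r_j$ is a permutation, $i$ appears at exactly one position. Hence the indicator of this event can be written as a sum of disjoint indicators:
\begin{equation*}
\mathbf 1\{\tau(j,i)\le t_{\mathrm{reveal}}\}=\sum_{k=1}^{t_{\mathrm{reveal}}}\mathbf 1\{f_k(r_j)=i\}.
\end{equation*}
This disjointness is the one point that needs care: it is what lets probabilities be summed over $k$ without double counting.

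Next, apply the law of total probability over $(I,J)$:
\begin{equation*}
\mathbb P(\text{end by } t_{\mathrm{reveal}})=\sum_{i=1}^N\sum_{j=1}^M z_i y_j \sum_{k=1}^{t_{\mathrm{reveal}}}\mathbf 1\{f_k(r_j)=i\}.
\end{equation*}
Exchanging the finite sums over $j$ and $k$ and pulling out $z_i$ gives, for each $i$, the inner quantity $\sum_{k=1}^{t_{\mathrm{reveal}}}\sum_{j:\,f_k(r_j)=i} y_j$. This is exactly \eqref{eq:endprob_reveal}.

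There is no substantial obstacle. The main thing to justify is the modeling point that termination by $t_{\mathrm{reveal}}$ is decided by the Stage~0 draws alone, because switching only happens after the prefix is observed. I would also remark on the edge case $t_{\mathrm{reveal}}=N$: there each inner sum equals $\sum_j y_j=1$, so the probability is $1$, which is a useful sanity check.
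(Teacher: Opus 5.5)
Your proof is correct and uses the same decomposition as the paper: the event of ending by $t_{\mathrm{reveal}}$ is the disjoint union over $k\le t_{\mathrm{reveal}}$ of ending exactly at step $k$, and each such event contributes $\sum_i z_i\sum_{j:\,f_k(r_j)=i}y_j$. The paper packages this as an induction on $t_{\mathrm{reveal}}$, whereas you sum the disjoint indicators directly, which also makes explicit two things the paper uses only implicitly: the independence of the two draws and the disjointness that comes from each $r_j$ being a permutation.
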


\begin{proof}
We prove Eq.~\eqref{eq:endprob_reveal} by induction on $t_{\mathrm{reveal}}$.

\textbf{Base case ($t_{\mathrm{reveal}}=1$).}
The game terminates by time $1$ if and only if the Hider selects location $i$
and the Seeker visits $i$ at time $1$, i.e., $f_1(r_j)=i$.
Therefore,
\begin{align}
\mathbb P(\text{end by }1)
&=
\sum_{i=1}^N
z_i
\sum_{j:\, f_1(r_j)=i} y_j,
\end{align}
which matches Eq.~\eqref{eq:endprob_reveal} for $t_{\mathrm{reveal}}=1$.

\textbf{Induction step.}
Assume Eq.~\eqref{eq:endprob_reveal} holds for $t_{\mathrm{reveal}}=t$,
where $1 \le t < N$.
Then
\begin{align}
\mathbb P(\text{end by }t+1)
&=
\mathbb P(\text{end by }t)
+
\mathbb P(\text{end exactly at }t+1).
\label{eq:ind_split}
\end{align}

The event \emph{``end exactly at $t+1$"} occurs when the Hider selects
location $i$ and the Seeker visits $i$ at time $t+1$, that is,
$f_{t+1}(r_j)=i$.
Hence,
\begin{align}
\mathbb P(\text{end exactly at }t+1)
=
\sum_{i=1}^N z_i
\sum_{j:\, f_{t+1}(r_j)=i} y_j.
\label{eq:end_exact}
\end{align}

Substituting the induction hypothesis into Eq.~\eqref{eq:ind_split}
and using Eq.~\eqref{eq:end_exact}, we obtain
\begin{align}
\mathbb P(\text{end by }t+1)
&=
\sum_{i=1}^N z_i
\left(
\sum_{k=1}^{t}
\sum_{j:\, f_k(r_j)=i} y_j
\right)
\\
&\quad+
\sum_{i=1}^N z_i
\sum_{j:\, f_{t+1}(r_j)=i} y_j
\nonumber\\
&=
\sum_{i=1}^N z_i
\left(
\sum_{k=1}^{t+1}
\sum_{j:\, f_k(r_j)=i} y_j
\right),
\end{align}
which is Eq.~\eqref{eq:endprob_reveal} with $t_{\mathrm{reveal}}=t+1$.
Therefore the result holds for all
$t_{\mathrm{reveal}}\in\{1,\dots,N\}$.
\end{proof}

\medskip

\section{Main Results}
We now present the main results for the game with partial route revelation. The analysis is divided into two parts. First, we study the restricted-seeker model and characterize the {value-of-information} created by the reveal. Next, we consider the seeker-aware model and establish bounds that relate the corresponding game values.
\subsection{Analysis with restricted seeker model}
In the restricted seeker model, the Seeker commits to a route at Stage~0
using only a priori information.
The Seeker solves the baseline matrix game defined by the payoff matrix $A$
and selects a mixed strategy $y^\star \in \Delta_M$. Since there can be infinitely many saddle point strategies, the strategy $y^\star$ is not known to the Hider. 
Instead, the Hider can only infer the supports of the mixed strategy $y^\star$ as per Assumption \ref{as:1}.
The Hider selects a hiding location $i \in \mathcal N$ according to $z^\star \in \Delta_N$ at $t=0$. Figure \ref{fig:reseek} represents game tree of restricted seeker model with $N=2$.
\begin{figure}[!h]
    \centering
    \includegraphics[width=\linewidth]{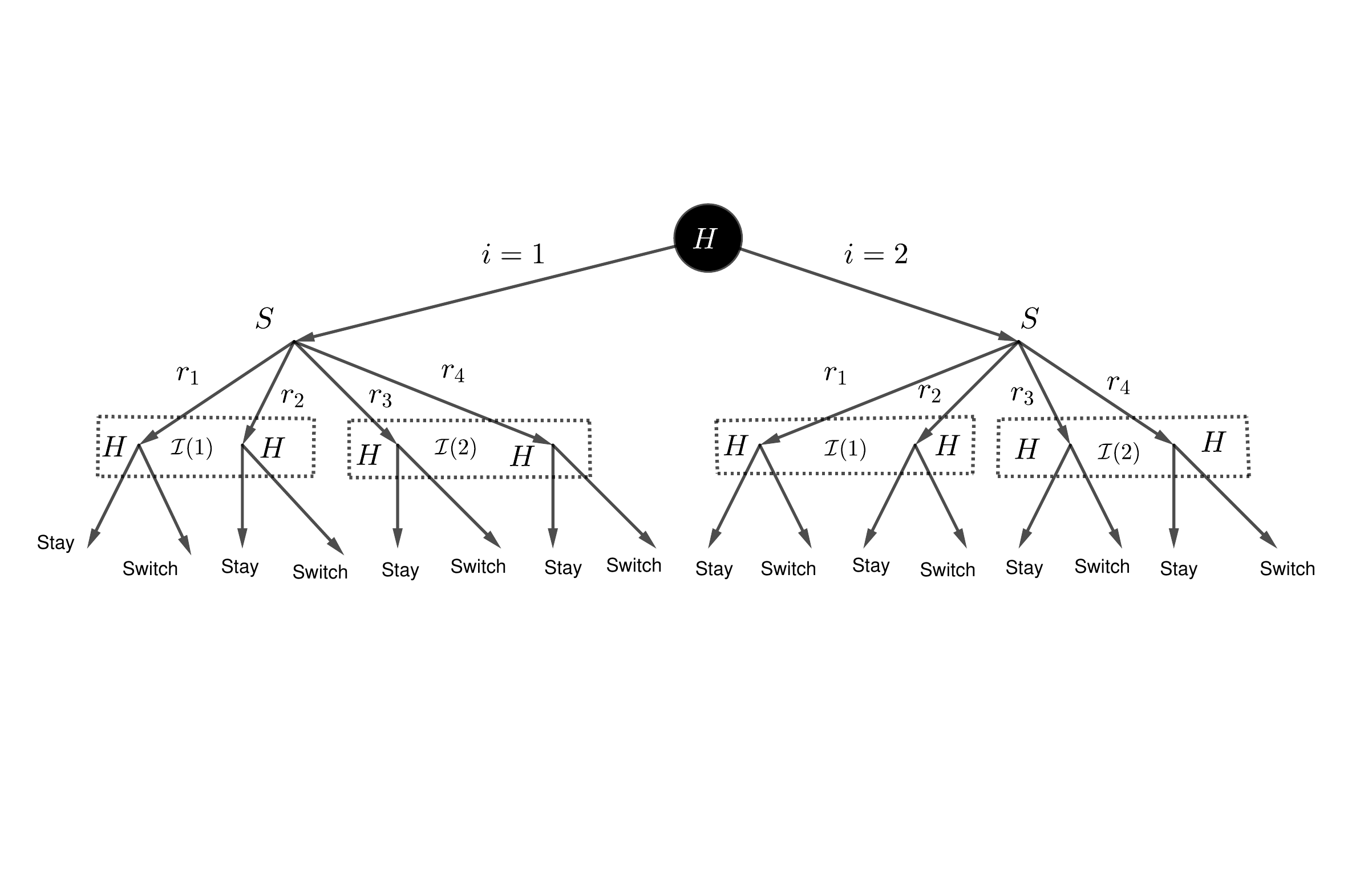}
    \caption{Game Tree for Restricted Seeker Model for $N=2$.}
    \label{fig:reseek}
\end{figure} 

 \begin{algorithm}[!h]
\caption{Building $A^{\mathrm{switch}}$ (restricted-seeker model)}
\label{alg:Aswitch}
\begin{algorithmic}[1]
\Require $A \in \mathbb{R}^{M \times N}$, routes $\{r_j\}_{j=1}^M$, 
         reveal time $t_{\mathrm{reveal}}$, switching cost $c \ge 0$
\Ensure $A^{\mathrm{switch}} \in \mathbb{R}^{M \times N}$

\For{$j = 1,\dots,M$}
    \State $h_j = (r_{j,1},\dots,r_{j,t_{\mathrm{reveal}}})$
    \State $\mathcal{V}(h_j) = \{r_{j,1},\dots,r_{j,t_{\mathrm{reveal}}}\}$
    \State $\mathcal{U}(h_j) = \mathcal{N} \setminus \mathcal{V}(h_j)$

    \For{$i = 1,\dots,N$}
        \If{$i \in \mathcal{V}(h_j)$}
            \State $A^{\mathrm{switch}}(j,i) = A(j,i)$
        \Else
            \ForAll{$\hat{i} \in \mathcal{U}(h_j)$}
                \If{$\hat{i} = i$} \State $$A^{\mathrm{red}}(h_j,i,\hat{i},c)  = A(j,\hat{i}) - A(j,r_{j,t_{\mathrm{reveal}}})$$
                \Else \State $$A^{\mathrm{red}}(h_j,i,\hat{i},c) 
                    = A(j,\hat{i}) - A(j,r_{j,t_{\mathrm{reveal}}}) - c$$
                \EndIf
            \EndFor
           \State
           {\begin{align*}
               A^{\mathrm{switch}}(j,i)
&=
A\big(j,r_{j,t_{\mathrm{reveal}}}\big)
+
\max_{\hat{i}\in\mathcal{U}(h_j)}
A^{\mathrm{red}}(h_j,i,\hat{i},c)
           \end{align*}}
        \EndIf
    \EndFor
\EndFor
\end{algorithmic}
\end{algorithm}
At time $t_{\mathrm{reveal}}$, the Hider observes the visited set
$\mathcal V(h_j)$ generated by the realized route $r_j$.
If $i \in \mathcal U(h_j)$, the Hider may either stay at $i$
or switch once to some
$\hat i \in \mathcal U(h_j)\setminus\{i\}$
by paying a cost of $c$.


Formally, fix a realized route $r_j=(r_{j,1},\dots,r_{j,N})$ and a reveal time $t_{\mathrm{reveal}}$.
Let $h_j=(r_{j,1},\dots,r_{j,t_{\mathrm{reveal}}})$ denote the revealed prefix and
let $r_{j,t_{\mathrm{reveal}}}$ be the node visited at the reveal time.
For an initial hiding location $i\in\mathcal U(h_j)$ and any admissible post-reveal
location $\hat i\in\mathcal U(h_j)$, define
\begin{align}
A^{\mathrm{red}}(h_j,i,\hat i,c)
=
\begin{cases}
A(j,\hat i)-A\big(j,r_{j,t_{\mathrm{reveal}}}\big),
& \text{if } \hat i=i,\\[4pt]
A(j,\hat i)-A\big(j,r_{j,t_{\mathrm{reveal}}}\big)-c,
&\text{if } \hat i\neq i,
\end{cases}
\label{eq:Ared_restricted}
\end{align}
where $A(j,\hat i)-A\big(j,r_{j,t_{\mathrm{reveal}}}\big)$ is the remaining travel
distance along the committed route $r_j$ from the reveal node to $\hat i$.
The quantity $A^{\mathrm{red}}(h_j,i,\hat i,c)$ denotes the reveal-stage payoff
under route $r_j$. If the Hider stays at $i$ ($\hat{i} = i$), the payoff equals the remaining travel distance along $r_j$ to $i$. If the Hider switches to
$\hat i\neq i$, the payoff equals the remaining distance to $\hat i$ minus
the switching cost $c$. Thus, Eq.~\eqref{eq:Ared_restricted} captures the
post-reveal stay/switch decision along the committed route.

In reveal stage, the Hider will then choose the location
$\hat i \in \mathcal U(h_j)$
that maximizes $A^{\mathrm{red}}(h_j,i,\hat i,c)$.
Thus, the route-wise switching payoff is
{\begin{equation}
A^{\mathrm{switch}}(j,i)
=
A\big(j,r_{j,t_{\mathrm{reveal}}}\big)
+
\max_{\hat i\in\mathcal U(h_j)}
A^{\mathrm{red}}(h_j,i,\hat i,c).
\label{eq:Aswitch_restricted}
\end{equation}}

{For locations that have already been visited at the reveal time,
{there is no option to switch}. If $i \in \mathcal V(h_j)$,
the game is already ended 
before the reveal stage,
and the payoff equals the baseline cost. In this case, 
$
A^{\mathrm{switch}}(j,i) = A(j,i)$.
{Thus $A^{\mathrm{switch}}(j,i)$ is defined for all
$j \in \{1,\dots,M\}$ and $i \in \mathcal N$, and the resulting
matrix $A^{\mathrm{switch}} \in \mathbb{R}^{M \times N}$ has the
same dimensions as the baseline payoff matrix $A$. Note that, the matrix $A^{\mathrm{switch}}$ depends on the switching cost $c$ through
Eq.~\eqref{eq:Ared_restricted}. For notational simplicity, this dependence is not
made explicit unless needed.}  

Algorithm~\ref{alg:Aswitch} {summarizes these steps involved in constructing} the effective payoff matrix
$A^{\mathrm{switch}}$ by embedding the reveal-stage stay/switch decision into each route-location pair. By construction,
$
A^{\mathrm{switch}} \in \mathbb{R}^{M \times N},
$
with rows indexed by routes and columns indexed by hiding locations. 

{Although the restricted Seeker remains committed to the baseline
strategy and does not adapt after the reveal, $A^{\mathrm{switch}}$
captures the Hider's optimal stay/switch payoff for each
route - location pair. For comparison, we evaluate its saddle-point
value against the Seeker's best response. The corresponding benchmark
value is}
\begin{equation}
v^{\mathrm{switch}}
=
\max_{z \in \Delta_N}
\min_{y \in \Delta_M}
y^\top A^{\mathrm{switch}} z.
\label{eq:vswitch_restricted}
\end{equation}

{For each route - location pair, the Hider always has the option to keep the original hiding location. Therefore, the switching model cannot produce a smaller payoff for the Hider than the baseline game, and the game value satisfies
\begin{align}\label{eq:vswitch_restricted}
v^{\mathrm{switch}} \ge v^{\mathrm{base}}.
\end{align}}



\medskip




\begin{algorithm}[H]
\caption{Hider's Strategy in Restricted Seeker Model}
\label{alg:hider_strategy}
\begin{algorithmic}[1]
\Statex \textbf{Stage 0 (Hiding Location Selection):}
\State The Hider knows the supports of the Seeker's mixed strategy ${y}^\star$.
\State The Hider selects a hiding location $i$ from the set of locations $\mathcal{N}$ according to ${z}^\star \in \Delta_N$.

\Statex \textbf{Reveal Stage:}
\State At the reveal time $t_{\mathrm{reveal}}$, the Hider observes the visited set $\mathcal{V}({h}_j)$ generated by the realized Seeker route ${r}_j$.
\If{$i \in \mathcal{U}({h}_j)$}
    \State The Hider has two options:
    \State \hspace{2em} a. Stay at the initial hiding location $i$.
    \State \hspace{2em} b. Switch to a different location $\hat{i}$ in the unvisited set $\mathcal{U}({h}_j)$, incurring a switching cost $c$.
    \State The Hider chooses the option that maximizes the reduced payoff $A^{\mathrm{red}}({h}_j, i, \hat{i}, c)$, as defined in Equation~\eqref{eq:Ared_restricted}.
\EndIf
\State The effective payoff matrix $\mathbf{A}^{\mathrm{switch}}$ is constructed according to Algorithm~\ref{alg:Aswitch}.
\end{algorithmic}
\end{algorithm}

Algorithm~\ref{alg:hider_strategy} summarizes the Hider's decision process
in the restricted-seeker model.
The reveal stage introduces an additional strategic option for the Hider. Upon observing the revealed prefix, the Hider may relocate to another admissible location before the Seeker completes the route. This relocation can alter the payoff relative to remaining at the original location. 

{
Since the Hider knows the support of the Seeker's baseline strategy
$y^\star$, but does not know the realized continuation beyond the
observed prefix, define the set of supported routes consistent with
$h_j$ as
\begin{equation}
\mathcal I^\star(h_j)
=
\left\{
k\in\mathcal I(h_j):y_k^\star>0
\right\}.
\label{eq:supported_information_set}
\end{equation}
For $r_k\in\mathcal I^\star(h_j)$, let
$A_k^{\mathrm{red}}(h_j,i,\hat{i},c)$ denote the reduced payoff in
\eqref{eq:Ared_restricted} evaluated along route $r_k$.}

To measure this incremental benefit, we formally define the value of
information (VOI) as follows.
\smallskip
{\begin{definition}\label{def:voi}[Value-of-Information]
Let $r_j\in\operatorname{supp}(y^\star)$ be a Seeker route with
revealed prefix $h_j$, and let $i\in\mathcal U(h_j)$ be the initial
hiding location.
The route-level value-of-information is defined as the improvement
obtained by allowing the Hider to optimally use the revealed
information, relative to retaining the original hiding location:
\begin{equation}
\begin{aligned}
\mathrm{VOI}(r_j,i;c)
={}&
\max_{\hat{i}\in\mathcal U(h_j)}
\min_{k\in\mathcal I^\star(h_j)}
A_k^{\mathrm{red}}(h_j,i,\hat{i},c)
\\
&-
\min_{k\in\mathcal I^\star(h_j)}
A_k^{\mathrm{red}}(h_j,i,i,c).
\end{aligned}
\label{eq:voi_restricted}
\end{equation}
$\hfill \blacksquare$
\end{definition} 
The first term in \eqref{eq:voi_restricted} is the best
reveal-stage payoff available to the Hider, evaluated in the worst
case over all supported Seeker routes consistent with the observed
prefix. The second term is the corresponding payoff when the Hider
retains its original hiding location $i$.
Since the stay action $\hat{i}=i$ is included in the maximization,
\begin{equation}
\mathrm{VOI}(r_j,i;c)\geq 0.
\end{equation}
If $i\in\mathcal V(h_j)$, the treasure has already been found by the
reveal time, and we define
\begin{equation}
\mathrm{VOI}(r_j,i;c)=0.
\end{equation}
For each hiding location $i$, define the worst-case
value-of-information as
\begin{equation}
\overline{\mathrm{VOI}}(i;c)
=
\min_{j\in\operatorname{supp}(y^\star)}
\mathrm{VOI}(r_j,i;c).
\label{eq:barvoi_restricted}
\end{equation}
The expected value-of-information under the Hider's mixed strategy is
\begin{equation}
\mathbb E[\mathrm{VOI}](c)
=
{z^\star}^\top\overline{\mathrm{VOI}}(c),
\label{eq:expected_voi}
\end{equation}
where
$
\overline{\mathrm{VOI}}(c)
=
\big[
\overline{\mathrm{VOI}}(1;c),\ldots,
\overline{\mathrm{VOI}}(N;c)
\big]^\top.
$
}
\medskip



\subsection{Critical Switching Cost}
For each route $r_j$ with revealed prefix $h_j$
and each $i\in\mathcal U(h_j)$, define the route-wise
free-switching advantage
\begin{equation}
c^\star(j,i)
=
\max_{\hat{i}\in\mathcal U(h_j)}
\Big(
A_j^{\mathrm{red}}(h_j,i,\hat{i},0)
-
A_j^{\mathrm{red}}(h_j,i,i,0)
\Big),
\label{eq:cstar_restricted}
\end{equation}
{where the subscript $j$ emphasizes that the reduced payoff is
evaluated along the fixed route $r_j$.}
Using Eq.~\eqref{eq:Ared_restricted} with $c=0$,
Eq. \eqref{eq:cstar_restricted} can be written as
\begin{align}
c^\star(j,i)
&=
\max_{\hat{i}\in\mathcal U(h_j)}
\big(A(j,\hat{i})-A(j,i)\big)
\nonumber\\
&=
A(j,r_{j,N})-A(j,i),
\label{eq:cstar_simplified}
\end{align}
since the cumulative payoff $A(j,\hat{i})$ is maximized over the
unvisited set at the last location visited along the committed route
$r_j$.

The quantity $c^\star(j,i)$ therefore represents the maximum
zero-cost advantage that switching can provide over staying at
location $i$ when the Seeker is known to follow route $r_j$.
Define
\begin{equation}
c^\star_{\mathrm{global}}
=
\max_{j \in \rm{supp}(y^\star)}
\max_{i\in\mathcal U(h_j)}
c^\star(j,i),
\label{eq:cstar_global_restricted}
\end{equation}
{which essentially upper bounds the zero-cost
switching advantage along every admissible supported route.
In the next result, we use $c^\star_{\mathrm{global}}$ to bound
$\mathbb E[\mathrm{VOI}]$ as a function of the switching cost $c$.}

\medskip

\begin{theorem}[Upper bound on $\mathbb{E}(\mathrm{VOI})$ as a function of $c$]
\label{lem:expvoi_bound}
In the restricted-seeker model, the expected value-of-information satisfies
\begin{align}
0
\le
\mathbb{E}[\mathrm{VOI}](c)
\le
\max\!\left\{
c^\star_{\mathrm{global}}-c,\,0
\right\},
\label{eq:expvoi_bound}
\end{align}
where $c^\star_{\mathrm{global}}$ is defined in
Eq.~\eqref{eq:cstar_global_restricted}.
Consequently, if $c\ge c^\star_{\mathrm{global}}$, then
$\mathbb{E}[\mathrm{VOI}](c)=0$.
\end{theorem}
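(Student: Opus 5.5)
The plan is to prove the two inequalities in \eqref{eq:expvoi_bound} pointwise, first for the route-level quantity $\mathrm{VOI}(r_j,i;c)$, and then pass them through the minimum in \eqref{eq:barvoi_restricted} and the convex combination in \eqref{eq:expected_voi}.

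The lower bound is immediate. For $i\in\mathcal U(h_j)$ we already noted that $\mathrm{VOI}(r_j,i;c)\ge 0$, because the stay action $\hat i=i$ appears in the maximization. For $i\in\mathcal V(h_j)$ we have $\mathrm{VOI}(r_j,i;c)=0$ by definition. A minimum over $j\in\operatorname{supp}(y^\star)$ of nonnegative numbers is nonnegative, and $z^\star\in\Delta_N$. Hence $\mathbb E[\mathrm{VOI}](c)\ge 0$.

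For the upper bound, fix $r_j\in\operatorname{supp}(y^\star)$ and $i\in\mathcal U(h_j)$. I will show that every term inside the outer maximum over $\hat i$ in \eqref{eq:voi_restricted}, after subtracting the stay term, is at most $\max\{c^\star_{\mathrm{global}}-c,0\}$.
\begin{itemize}
\item For $\hat i=i$ the difference is exactly $0$.
\item For $\hat i\neq i$, let $k_0\in\mathcal I^\star(h_j)$ attain $\min_{k}A_k^{\mathrm{red}}(h_j,i,i,c)$. This is the key trick for handling the inner minima over the information set. Then
\[
\min_{k\in\mathcal I^\star(h_j)}A_k^{\mathrm{red}}(h_j,i,\hat i,c)-A_{k_0}^{\mathrm{red}}(h_j,i,i,c)\le A_{k_0}^{\mathrm{red}}(h_j,i,\hat i,c)-A_{k_0}^{\mathrm{red}}(h_j,i,i,c).
\]
By \eqref{eq:Ared_restricted}, the common term $A(k_0,r_{k_0,t_{\mathrm{reveal}}})$ cancels, so the right-hand side equals $A(k_0,\hat i)-A(k_0,i)-c$.
\item Since $r_{k_0}$ shares the prefix $h_j$, we have $\mathcal U(h_{k_0})=\mathcal U(h_j)$, which contains both $i$ and $\hat i$. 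Using \eqref{eq:cstar_restricted}--\eqref{eq:cstar_simplified}, this gives $A(k_0,\hat i)-A(k_0,i)\le c^\star(k_0,i)$.
\item Since $y^\star_{k_0}>0$, the route $k_0$ lies in the support. Therefore $c^\star(k_0,i)\le c^\star_{\mathrm{global}}$ by \eqref{eq:cstar_global_restricted}.
\end{itemize}
Taking the maximum over $\hat i$ yields $\mathrm{VOI}(r_j,i;c)\le \max\{c^\star_{\mathrm{global}}-c,0\}$. The same bound holds trivially when $i\in\mathcal V(h_j)$.

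To finish, $\overline{\mathrm{VOI}}(i;c)$ is a minimum over $j\in\operatorname{supp}(y^\star)$ of quantities each bounded by this constant, so it is bounded by it too. Averaging against $z^\star\in\Delta_N$ preserves the bound, which gives \eqref{eq:expvoi_bound}. When $c\ge c^\star_{\mathrm{global}}$, the lower and upper bounds coincide at $0$, which gives the final claim.

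The only delicate step is the min–min comparison over $\mathcal I^\star(h_j)$. The switch and stay terms may be minimized by different routes, so I pick the stay-minimizer $k_0$ and bound the switch term at that same route. This reduces the comparison to a single committed route, where the cancellation and the definition of $c^\star$ apply directly.
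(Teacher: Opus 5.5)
Your proposal is correct and follows essentially the same route as the paper. The paper also evaluates the switch term at a minimizer $k_b$ of the stay term, via $\min_k a_k-\min_k b_k\le\max_k(a_k-b_k)$, bounds the single-route gap by $c^\star(k,i)\le c^\star_{\mathrm{global}}$, and then passes the bound through the minimum over $\operatorname{supp}(y^\star)$ and the $z^\star$-average. Your explicit checks that $k_0\in\operatorname{supp}(y^\star)$ and $\mathcal U(h_{k_0})=\mathcal U(h_j)$ supply a detail the paper leaves implicit.
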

{
\begin{proof}
Fix a revealed prefix $h_j$ and an initial hiding location
$i\in\mathcal U(h_j)$. For any switching action $\hat i\neq i$,
the constant switching cost gives
\begin{align}
A_k^{\mathrm{red}}(h_j,i,\hat i,c)
&=
A_k^{\mathrm{red}}(h_j,i,\hat i,0)-c,\\
A_k^{\mathrm{red}}(h_j,i,i,c)
&=
A_k^{\mathrm{red}}(h_j,i,i,0).
\end{align}
Therefore,
\begin{align}
&\min_{k\in\mathcal I^\star(h_j)}
A_k^{\mathrm{red}}(h_j,i,\hat i,c)
-
\min_{k\in\mathcal I^\star(h_j)}
A_k^{\mathrm{red}}(h_j,i,i,c)
\nonumber\\
&=
\min_{k\in\mathcal I^\star(h_j)}
A_k^{\mathrm{red}}(h_j,i,\hat i,0)
-
\min_{k\in\mathcal I^\star(h_j)}
A_k^{\mathrm{red}}(h_j,i,i,0)
-c.
\label{eq:voi_pf_step1}
\end{align}
Using
\begin{equation}
\min_k a_k-\min_k b_k
\le
\max_k(a_k-b_k),
\end{equation}
which follows by choosing
$k_b\in\arg\min_k b_k$, so that
\[
\min_k a_k-\min_k b_k
\le
a_{k_b}-b_{k_b}
\le
\max_k(a_k-b_k),
\]
Eq.~\eqref{eq:voi_pf_step1} gives
\begin{align}
&\min_{k\in\mathcal I^\star(h_j)}
A_k^{\mathrm{red}}(h_j,i,\hat i,c)
-
\min_{k\in\mathcal I^\star(h_j)}
A_k^{\mathrm{red}}(h_j,i,i,c)
\nonumber\\
&\le
\max_{k\in\mathcal I^\star(h_j)}
\Big(
A_k^{\mathrm{red}}(h_j,i,\hat i,0)
-
A_k^{\mathrm{red}}(h_j,i,i,0)
\Big)-c.
\label{eq:voi_pf_step2}
\end{align}
From the definition of the route-wise critical cost,
\begin{align}
A_k^{\mathrm{red}}(h_j,i,\hat i,0)
-
A_k^{\mathrm{red}}(h_j,i,i,0)
\le
c^\star(k,i)
\le
c^\star_{\mathrm{global}}.
\end{align}
Hence, from Eq. \eqref{eq:cstar_global_restricted}, for every switching action $\hat i\neq i$,
\begin{align}
&\min_{k\in\mathcal I^\star(h_j)}
A_k^{\mathrm{red}}(h_j,i,\hat i,c)
-
\min_{k\in\mathcal I^\star(h_j)}
A_k^{\mathrm{red}}(h_j,i,i,c)
\nonumber\\
&\le
c^\star_{\mathrm{global}}-c.
\end{align}
Since the stay action $\hat i=i$ is always feasible,
Definition~\ref{def:voi} gives
\begin{align}
0
\le
\mathrm{VOI}(r_j,i;c)
\le
\max\!\left\{
c^\star_{\mathrm{global}}-c,\,0
\right\}.
\end{align}
Taking the minimum over supported routes gives
\begin{align}
0
\le
\overline{\mathrm{VOI}}(i;c)
\le
\max\!\left\{
c^\star_{\mathrm{global}}-c,\,0
\right\}.
\end{align}
Finally, since
\begin{equation}
\mathbb E[\mathrm{VOI}](c)
=
{z^\star}^\top\overline{\mathrm{VOI}}(c),
\qquad z^\star \in\Delta_N,
\end{equation}
the same upper bound holds for the expected value:
\begin{align}
0
\le
\mathbb E[\mathrm{VOI}](c)
\le
\max\!\left\{
c^\star_{\mathrm{global}}-c,\,0
\right\}.
\end{align}
If $c\ge c^\star_{\mathrm{global}}$, the right-hand side is zero,
and therefore $\mathbb E[\mathrm{VOI}](c)=0$.
\end{proof}}

\smallskip

 We discuss the dependence of VOI on $t_\mathrm{reveal}$ next.
 To make the dependence of VOI on the reveal time explicit, we write
$\mathrm{VOI}(r_j,i,t_{\mathrm{reveal}})$ for the quantity in
Definition~\ref{def:voi} evaluated at the prefix $h_j(t_{\mathrm{reveal}})$.
 \smallskip

\begin{proposition}[Effect of reveal time on VOI]
\label{lem:treveal_mono}
The expected value-of-information,
viewed as a function of $t_{\mathrm{reveal}}$, is non-increasing:
\begin{equation}
\mathbb E[\mathrm{VOI}](t_{\mathrm{reveal}}{+}1)
\le
\mathbb E[\mathrm{VOI}](t_{\mathrm{reveal}}),
\label{eq:expvoi_mono_lem}
\end{equation}
for $t_{\mathrm{reveal}}\in\{1,\dots,N{-}1\}$.
\end{proposition}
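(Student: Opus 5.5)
The plan is to prove the inequality pointwise, first for the route-level quantity $\mathrm{VOI}(r_j,i,t_{\mathrm{reveal}})$, then pass it through the minimum in \eqref{eq:barvoi_restricted} and the $z^\star$-average in \eqref{eq:expected_voi}. Since $z^\star\in\Delta_N$ has nonnegative entries, the last step is immediate. The first step is to simplify Definition~\ref{def:voi}. All routes $k\in\mathcal I^\star(h_j)$ share the prefix $h_j$, so the subtracted term $A(k,r_{k,t_{\mathrm{reveal}}})$ in \eqref{eq:Ared_restricted} is the same constant for every such $k$ and cancels. Writing $t=t_{\mathrm{reveal}}$, $I_t=\mathcal I^\star(h_j(t))$ and $U_t=\mathcal U(h_j(t))$, this gives
\begin{equation*}
\mathrm{VOI}(r_j,i,t)=\max_{\hat i\in U_t}\Big(\min_{k\in I_t}A(k,\hat i)-c\,\mathbf 1\{\hat i\neq i\}\Big)-\min_{k\in I_t}A(k,i).
\end{equation*}
This is a statement purely in terms of the baseline matrix $A$, which makes the comparison between $t$ and $t+1$ tractable.

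Next I split into cases on the initial location $i$.
\begin{itemize}
\item If $i\in\mathcal V(h_j(t))$, then also $i\in\mathcal V(h_j(t+1))$, and both sides are zero.
\item If $i=r_{j,t+1}$, then $\mathrm{VOI}(r_j,i,t+1)=0\le \mathrm{VOI}(r_j,i,t)$, because the route-level VOI is nonnegative.
\item The substantive case is $i\in U_{t+1}$. Here two monotone effects appear, since $U_{t+1}\subseteq U_t$ and $I_{t+1}\subseteq I_t$.
\end{itemize}
The shrinking of the admissible set $U$ can only lower the outer maximum, which goes in the desired direction.

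The refinement of the information set is the main obstacle. Passing from $I_t$ to $I_{t+1}$ raises every inner minimum, both the switch term and the stay term $\min_k A(k,i)$. The sign of the net effect is therefore not determined by set inclusion alone.

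I would first try to control this with the elementary inequality already used in the proof of Theorem~\ref{lem:expvoi_bound}, namely $\min a-\min b\le \max(a-b)$, and compare the two differences term by term. Failing that, I would try to exhibit, for each route $j'$ attaining the minimum in $\overline{\mathrm{VOI}}(i)$ at time $t$, a supported route $j$ with $\mathrm{VOI}(r_j,i,t+1)\le\mathrm{VOI}(r_{j'},i,t)$. The natural candidate is $j=j'$, which is consistent with both prefixes. I would exploit the fact that the outer $\min_j$ in \eqref{eq:barvoi_restricted} ranges over the whole support, so the worst-case route can be chosen freely.

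If a general argument does not close, I would at least establish the result in the case where the refinement is trivial. This is the case where all supported continuations of $h_j(t)$ visit the same node at step $t+1$, so that $I_{t+1}=I_t$. There the inequality follows from $U_{t+1}\subseteq U_t$ alone. In the remaining case, the task is to identify the extra structural property of $\operatorname{supp}(y^\star)$, coming from the equilibrium conditions of $A$, that rules out an increase.
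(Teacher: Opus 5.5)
Your simplification of the VOI definition is correct: the common prefix cost $A(k,r_{k,t})$ is the same for all $k\in\mathcal I^\star(h_j)$ and cancels. Your case split is also correct. Your three-step skeleton (route level, then $\min$ over $\operatorname{supp}(y^\star)$, then averaging against $z^\star$) is the paper's.

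\textbf{The gap.} The proposal is not a proof. The only substantive case, $\mathcal I^\star(h_j(t{+}1))\subsetneq\mathcal I^\star(h_j(t))$, is left open, and none of the tools you list can close it.
\begin{itemize}
\item The inequality $\min a-\min b\le\max(a-b)$ bounds VOI at a fixed reveal time. It does not control how minima change when the index set shrinks.
\item Your ``natural candidate'' $j=j'$ is exactly the route-level inequality, and that inequality is false in general.
\item Freedom in choosing the route for the outer minimum does not rescue it either.
\end{itemize}

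\textbf{A counterexample.} Take $O=(0,0)$ and locations $1=(1,0)$, $2=(1,1)$, $3=(1,-1)$, $4=(2,0)$. Let the supported routes be $(1,2,3,4)$, $(1,2,4,3)$, $(1,3,2,4)$ and $(1,3,4,2)$, and consider $i=4$.
\begin{itemize}
\item At $t_{\mathrm{reveal}}=1$, all four routes share the prefix $(1)$. The worst-case reduced payoff is $1+\sqrt2$ for staying and $1-c$ for switching to $2$ or to $3$. So $\mathrm{VOI}(r_j,4)=0$ on every supported route, and $\overline{\mathrm{VOI}}(4)=0$.
\item At $t_{\mathrm{reveal}}=2$, the prefix $(1,2)$ retains only $(1,2,3,4)$ and $(1,2,4,3)$. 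These give $\sqrt2$ for staying and $2-c$ for switching to $3$, and the prefix $(1,3)$ behaves symmetrically. Hence $\mathrm{VOI}(r_j,4)=2-\sqrt2-c$ on every supported route, so $\overline{\mathrm{VOI}}(4)=2-\sqrt2-c>0$ for $0\le c<2-\sqrt2$.
\end{itemize}
Refining the information set discards exactly the routes that made the switch target look dangerous, while keeping one that reaches $4$ early.

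\textbf{Consequences.} No argument that uses only $\mathcal U(h_j(t{+}1))\subseteq\mathcal U(h_j(t))$, $\mathcal I^\star(h_j(t{+}1))\subseteq\mathcal I^\star(h_j(t))$ and the outer minimum can yield $\overline{\mathrm{VOI}}(i,t{+}1)\le\overline{\mathrm{VOI}}(i,t)$. With all Hider weight on location $4$, the weighted average also increases, from $0$ to $2-\sqrt2-c$.

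The paper's proof follows your skeleton but settles the route-level step by citing only the shrinkage of $\mathcal U(h_j)$. In effect it keeps $\mathcal I^\star$ fixed, which is precisely the step you flagged, so it does not supply the missing ingredient either.

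A complete argument needs one of two things:
\begin{itemize}
\item an additional hypothesis, for instance your fallback condition $\mathcal I^\star(h_j(t{+}1))=\mathcal I^\star(h_j(t))$ at every supported prefix, under which $\mathcal U$-shrinkage alone does suffice; or
\item a genuine use of the equilibrium coupling between $z^\star$ and $\operatorname{supp}(y^\star)$ that rules out configurations like the one above.
\end{itemize}
Your plan correctly identifies the latter as the remaining task but does not carry it out.
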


\begin{proof} 
By definition,
$\mathcal U(h_j(t_{\mathrm{reveal}}))$ is the unvisited set at $t_\mathrm{reveal}$.
Since the prefix length increases by one when passing from
$t_{\mathrm{reveal}}$ to $t_{\mathrm{reveal}}{+}1$, one additional node is visited.
Therefore,
\begin{align}
\mathcal U \big(h_j(t_{\mathrm{reveal}}{+}1)\big)
&=
\mathcal U \big(h_j(t_{\mathrm{reveal}})\big)\setminus\{r_{j,t_{\mathrm{reveal}}{+}1}\},
 \\ 
\mathcal U\!\big(h_j(t_{\mathrm{reveal}}{+}1)\big)&\subseteq
\mathcal U\!\big(h_j(t_{\mathrm{reveal}})\big).
\label{eq:U_shrink_pf}
\end{align}

\smallskip

For any $i\in\mathcal U(h_j(t_{\mathrm{reveal}}))$, define the time-indexed
route-wise threshold by
\begin{align}
\begin{split}
   &c^\star(j,i,t_{\mathrm{reveal}})
=
\\&~~~~~\max_{\hat{i}\in \mathcal U(h_j(t_{\mathrm{reveal}}))}
\Big(
A^{{\mathrm{red}}}(h_j(t_{\mathrm{reveal}}),i,\hat{i},0)
\\&~~~~~~~~~~~~~~~~~~~~~~~~~~~~-
A^{{\mathrm{red}}}(h_j(t_{\mathrm{reveal}}),i,i,0)
\Big),
\label{eq:cstar_time_def} 
\end{split}
\end{align}
which matches Eq.~\eqref{eq:cstar_restricted} with explicit dependence on
$t_{\mathrm{reveal}}$ and with $c=0$.
By Eq.~\eqref{eq:U_shrink_pf}, this feasible set shrinks at
$t_{\mathrm{reveal}}{+}1$. Hence, a maximization over a smaller set cannot
exceed the maximization over a {set that contains the smaller set}, and
\begin{equation}
c^\star(j,i,t_{\mathrm{reveal}}{+}1)
\le
c^\star(j,i,t_{\mathrm{reveal}}).
\label{eq:cstar_mono_pf}
\end{equation}

Moreover, the value of $c^\star(j,i,t_{\mathrm{reveal}})$ 
can drop only if the node removed at
$t_{\mathrm{reveal}}{+}1$ is a maximizer at time
$t_{\mathrm{reveal}}$.
Indeed, letting $\hat i^\star(t_{\mathrm{reveal}})$ be any maximizer in
Eq.~\eqref{eq:cstar_time_def} at time $t_{\mathrm{reveal}}$,
Eq.~\eqref{eq:U_shrink_pf} implies:
\begin{align}
\nonumber
r_{j,t_{\mathrm{reveal}}{+}1}&\neq \hat i^\star(t_{\mathrm{reveal}})\\
\ \implies 
c^\star(j,i,t_{\mathrm{reveal}}{+}1)
&=
c^\star(j,i,t_{\mathrm{reveal}}).
\end{align}
Thus $c^\star(j,i)$ is typically piecewise constant in $t_{\mathrm{reveal}}$
and decreases only when the newly visited node is a maximizing destination.

\smallskip

Now, for $i\in\mathcal U(h_j(t_{\mathrm{reveal}}))$,  recall from
Definition~\ref{def:voi} that
$\mathrm{VOI}(r_j,i)$ involves a maximization over
$\hat i\in\mathcal U(h_j)$.
In the restricted model, $A^{\mathrm{switch}}(j,\cdot)$ is computed from
Eq.~\eqref{eq:Aswitch_restricted}, which maximizes over the admissible set
$\mathcal U(h_j)$.
By Eq.~\eqref{eq:U_shrink_pf}, the feasible set used in these 
maximizations shrinks as $t_{\mathrm{reveal}}$ increases.
Therefore, the route-level value $\mathrm{VOI}(r_j,i,t_{\mathrm{reveal}})$ cannot increase:
\begin{equation}
\mathrm{VOI}(r_j,i,t_{\mathrm{reveal}}{+}1)
\le
\mathrm{VOI}(r_j,i,t_{\mathrm{reveal}}).
\label{eq:voi_mono_pf}
\end{equation}

\smallskip

By Eq.~\eqref{eq:barvoi_restricted},
$\overline{\mathrm{VOI}}(i)=\min_j \mathrm{VOI}(r_j,i)$, and therefore, taking a minimum
over $j$ preserves the inequality in Eq.~\eqref{eq:voi_mono_pf}:
\begin{align}
\overline{\mathrm{VOI}}(i,{t_{\mathrm{reveal}}{+}1})
\le
\overline{\mathrm{VOI}}(i,{t_{\mathrm{reveal}}}).
\end{align}
Finally,
$\mathbb E[\mathrm{VOI}]=z^\top \overline{\mathrm{VOI}}$ with $z\in\Delta_N$,
so multiplying by $z_i\ge 0$ and summing over $i$ yields
Eq.~\eqref{eq:expvoi_mono_lem}.
\end{proof}

\smallskip

The results above characterize the restricted-seeker model and show how the
the value-of-information depends on the switching cost when the Seeker remains
committed to the initially selected route. We next consider a seeker-aware
model in which the Seeker may choose a continuation consistent with the
revealed prefix, and we derive bounds relating the corresponding game values.


\subsection{Analysis with seeker-aware model}
In the seeker-aware model, the Seeker anticipates the Hider's switching capability at the reveal stage. The game is therefore evaluated by backward induction. Figure \ref{fig:seekaware} depicts the game tree for seeker-aware model with $N=2$.
\begin{figure}[!h]
    \centering
    \includegraphics[width=\linewidth]{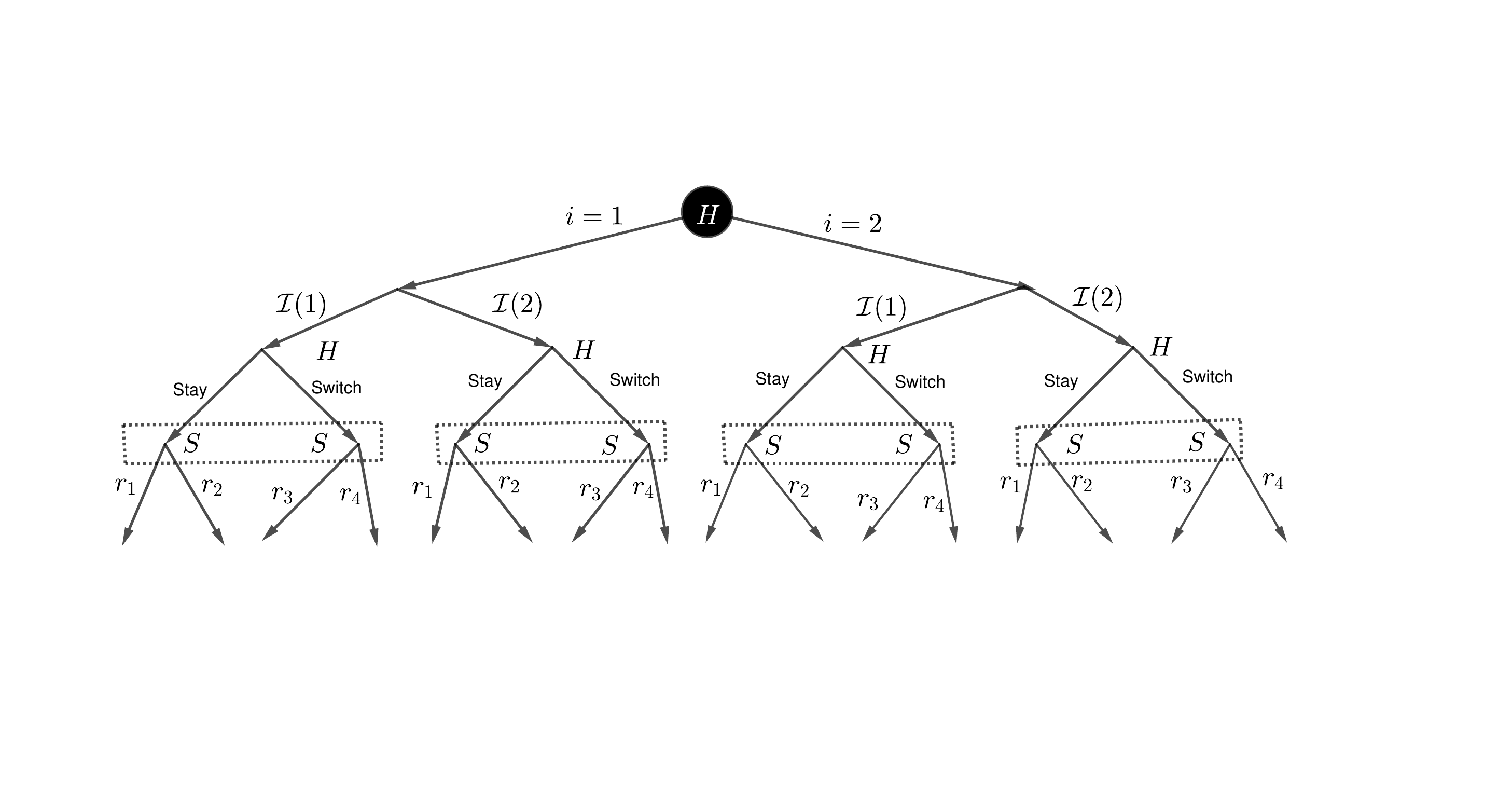}
    \caption{Game Tree for Seeker-Aware Model for $N=2$.}
    \label{fig:seekaware}
\end{figure}

Let us fix a reveal time $t_{\mathrm{reveal}}$.
For any route $r_j \in \mathcal R$, define the revealed prefix
$
h_j := (r_{j,1},\dots,r_{j,t_{\mathrm{reveal}}}).
$ The associated information set is
\begin{align}
\mathcal I(h_j)
=
\{ k \in \{1,\dots,M\} :
(r_{k,1},\dots,r_{k,t_{\mathrm{reveal}}}) = h_j \}.
\end{align}

At prefix $h_j$, if the treasure location $i$ satisfies
$i \in \mathcal V(h_j)$, the game terminates.
If $i \in \mathcal U(h_j)$, the Hider may stay at $i$
or switch once to some $\hat i \in \mathcal U(h_j) \setminus i$
by paying a cost of $c$.

\medskip

{For each prefix $h$ and initial hiding location
$i\in\mathcal U(h)$, the post-reveal interaction is a zero-sum
sub-game. The Seeker may randomize over the
prefix-consistent routes in $\mathcal I(h)$, while the Hider may
randomize over the admissible post-reveal locations in
$\mathcal U(h)$. The value of this sub-game is
\begin{align}
V(h,i)
&=
\min_{p\in\Delta_{|\mathcal I(h)|}}
\max_{q\in\Delta_{|\mathcal U(h)|}}
\sum_{\substack{k\in\mathcal I(h)\\
\hat i\in\mathcal U(h)}}
p_k q_{\hat i}
\left[
A(k,\hat i)
-
c\,\mathbf{1}_{\{\hat i\neq i\}}
\right],
\label{eq:V_feedback}
\end{align}
where $p$ and $q$ denote the mixed strategies of the Seeker and
Hider, respectively, in the sub-game.}


\medskip
\subsubsection{Feedback payoff matrix.}
Let $\mathcal{H}$ denote the set of all prefixes that arise
at time $t_{\mathrm{reveal}}$.
Define the feedback payoff matrix
$
A^{\mathrm{fb}} \in \mathbb R^{|\mathcal H| \times N}
$
entrywise as
\begin{align}
A^{\mathrm{fb}}(h,i)
=
\begin{cases}
A(j,i),
& i \in \mathcal V(h),
\\[4pt]
V(h,i),
& i \in \mathcal U(h),
\end{cases}
\label{eq:Afb_def}
\end{align}
where $r_j$ is any route generating prefix $h$.

The first case is well-defined because all routes in
$\mathcal I(h)$ coincide up to time $t_{\mathrm{reveal}}$.
Hence, if $i \in \mathcal V(h)$, the cumulative distance
$A(j,i)$ is identical for every $j$ such that $\pi(j)=h$, where
$\pi:\{1,\dots,M\}\to\mathcal H$ denotes the prefix map.
For $i\in\mathcal U(h)$, $A^{\mathrm{fb}}(h,i)$ is the value of the
corresponding zero-sum sub-game. Thus, the Seeker may
randomize over the prefix-consistent routes in $\mathcal I(h)$,
while the Hider may randomize over the admissible stay/switch actions.


\medskip
\subsubsection{Game value under feedback.}

Let $\pi:\{1,\dots,M\}\to\mathcal H$ denote the prefix map
$\pi(j)=h_j$.
A mixed strategy $y\in\Delta_M$ over routes induces
$\bar y\in\Delta_{|\mathcal H|}$ via
\begin{align}
\bar y(h)
=
\sum_{j:\,\pi(j)=h} y_j.
\label{eq:prefix_map}
\end{align}

The feedback value is
\begin{align}
v^{\mathrm{fb}}
=
\max_{z\in\Delta_N}
\min_{y\in\Delta_M}
\bar y^\top A^{\mathrm{fb}} z.
\label{eq:v_fb}
\end{align}

\medskip
Recall that $A^{\mathrm{switch}}\in\mathbb{R}^{M\times N}$ is indexed by
routes $j\in\{1,\dots,M\}$, whereas
$A^{\mathrm{fb}}\in\mathbb{R}^{|\mathcal H|\times N}$ is indexed by
prefixes $h\in\mathcal H$.
To compare the two games, define the matrix
$\tilde A^{\mathrm{fb}}\in\mathbb{R}^{M\times N}$ by evaluating
$A^{\mathrm{fb}}$ at the prefix generated by each route. Therefore, for all $j\in\{1,\dots,M\}$ and $ i\in\mathcal N$
\begin{align}
\tilde A^{\mathrm{fb}}(j,i)
:=
A^{\mathrm{fb}}(h_j,i)
=
A^{\mathrm{fb}}(\pi(j),i).
\label{eq:Afb_tilde_def}
\end{align}
Thus $\tilde A^{\mathrm{fb}}$ simply re-indexes the rows of
$A^{\mathrm{fb}}$ by routes.

\medskip

Let $A^{\mathrm{switch}}(c)$ and $\tilde A^{\mathrm{fb}}(c)$ denote the switching
and feedback payoff matrices obtained with switching cost $c$.
Let $v^{\mathrm{switch}}(c)$ and $v^{\mathrm{fb}}(c)$ denote the corresponding
game values.

\smallskip

\begin{proposition}[Effect of seeker awareness]
\label{lem:fb_vs_switch}
Let $A^{\mathrm{switch}}(c), \,\tilde A^{\mathrm{fb}}(c)\in\mathbb{R}^{M\times N}$
be the payoff matrices defined in
Eqs.~\eqref{eq:Aswitch_restricted} and~\eqref{eq:Afb_tilde_def} for a fixed switching cost $c$.
Then the corresponding game values satisfy
\begin{align}
\begin{split}
   v^{\mathrm{fb}}(c) 
\le
v^{\mathrm{switch}}(c)
\le
 v^{\mathrm{fb}}(c) + \delta,
\end{split}
\label{eq:value_bound_fb_switch}
\end{align}
where $\delta = 
\max_{j,i}
\left|
A^{\mathrm{switch}}(j,i,c)
-
\tilde A^{\mathrm{fb}}(j,i,c)
\right|. $
\end{proposition}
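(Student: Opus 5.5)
The plan is to reduce both inequalities in \eqref{eq:value_bound_fb_switch} to two facts. First, $\tilde A^{\mathrm{fb}}(c)\le A^{\mathrm{switch}}(c)$ entrywise. Second, the value of a finite zero-sum matrix game is monotone and $1$-Lipschitz in the sup-norm of its payoff matrix. Before using either fact, I rewrite $v^{\mathrm{fb}}$ as a game over routes rather than prefixes. By \eqref{eq:prefix_map} and \eqref{eq:Afb_tilde_def}, for every $y\in\Delta_M$ and $z\in\Delta_N$ we have $\bar y^\top A^{\mathrm{fb}} z=\sum_j y_j\, A^{\mathrm{fb}}(\pi(j),\cdot)\,z = y^\top \tilde A^{\mathrm{fb}} z$. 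Hence \eqref{eq:v_fb} equals $\max_{z}\min_{y} y^\top \tilde A^{\mathrm{fb}}(c) z$, which has exactly the same form as $v^{\mathrm{switch}}(c)$ in \eqref{eq:vswitch_restricted}, only with a different $M\times N$ matrix.

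\textbf{Step 1 (entrywise comparison).} Fix $j$ and $i$. If $i\in\mathcal V(h_j)$, both entries equal $A(j,i)$ by construction, so the difference is zero. If $i\in\mathcal U(h_j)$, the terms $A(j,r_{j,t_{\mathrm{reveal}}})$ cancel in \eqref{eq:Ared_restricted}--\eqref{eq:Aswitch_restricted}, giving
\begin{equation*}
A^{\mathrm{switch}}(j,i)=\max_{\hat i\in\mathcal U(h_j)}\big[A(j,\hat i)-c\,\mathbf 1_{\{\hat i\neq i\}}\big].
\end{equation*}
Next consider \eqref{eq:V_feedback}. The inner maximum over $q$ of a bilinear form is attained at a pure strategy $\hat i$. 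Since $j\in\mathcal I(h_j)$, the Seeker may choose $p=e_j$, which bounds the outer minimum from above:
\begin{equation*}
V(h_j,i)\le \max_{\hat i\in\mathcal U(h_j)}\big[A(j,\hat i)-c\,\mathbf 1_{\{\hat i\neq i\}}\big]=A^{\mathrm{switch}}(j,i).
\end{equation*}
Therefore $\tilde A^{\mathrm{fb}}(j,i)\le A^{\mathrm{switch}}(j,i)$ for all $(j,i)$. This also shows that $\delta$ equals $\max_{j,i}\big(A^{\mathrm{switch}}-\tilde A^{\mathrm{fb}}\big)(j,i)$.

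\textbf{Step 2 (monotonicity and Lipschitz bound).} For matrices $B\le C$ entrywise and any $y\in\Delta_M$, $z\in\Delta_N$, nonnegativity of the weights gives $y^\top Bz\le y^\top Cz$. Taking $\min_y$ and then $\max_z$ preserves this inequality, so $v^{\mathrm{fb}}(c)\le v^{\mathrm{switch}}(c)$. For the other side, Step 1 gives $A^{\mathrm{switch}}\le \tilde A^{\mathrm{fb}}+\delta\mathbf 1\mathbf 1^\top$ entrywise. Because $y^\top\mathbf 1\mathbf 1^\top z=1$ on the simplices, $y^\top A^{\mathrm{switch}}z\le y^\top\tilde A^{\mathrm{fb}}z+\delta$. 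Applying $\min_y$ and then $\max_z$ yields $v^{\mathrm{switch}}(c)\le v^{\mathrm{fb}}(c)+\delta$.

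\textbf{Main obstacle.} The only step that needs care is Step 1. One must check that the route $j$ itself lies in the Seeker's feasible set $\mathcal I(h_j)$ in the sub-game \eqref{eq:V_feedback}, so that the pure strategy $p=e_j$ is available. One must also check that the Hider's mixed maximization in \eqref{eq:V_feedback} reduces to the pure maximization used in \eqref{eq:Aswitch_restricted}. Once both points are verified, the remaining steps are the standard order-preserving and translation properties of $\max\min$ over the simplices.
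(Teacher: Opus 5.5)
Your proposal is correct and follows essentially the same route as the paper. Both arguments rewrite $v^{\mathrm{fb}}$ as a route-indexed game with $\tilde A^{\mathrm{fb}}$, show $\tilde A^{\mathrm{fb}}\le A^{\mathrm{switch}}$ entrywise because the Seeker can fix $p=e_j$ in the sub-game, and then apply monotonicity and translation of the matrix-game value. You prove those two value properties directly instead of citing the standard lemma, and you make explicit that the Hider's mixed maximum reduces to a pure one; the logic is otherwise the same.
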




\smallskip

\begin{proof}
We first express the feedback value using $\tilde A^{\mathrm{fb}}(c)$.
For any $y\in\Delta_M$, the induced distribution over prefixes defined in
Eq.~\eqref{eq:prefix_map} satisfies
\[
\sum_{h\in\mathcal H}\bar y(h)\,A^{\mathrm{fb}}(h,\cdot,c)
=
\sum_{j=1}^M y_j\,\tilde A^{\mathrm{fb}}(j,\cdot,c).
\]
Hence
\[
v^{\mathrm{fb}}(c)
=
\max_{z\in\Delta_N}
\min_{y\in\Delta_M}
y^\top \tilde A^{\mathrm{fb}}(c) z.
\]

Observe that for any route $r_j$ and location $i\in\mathcal N$,
\[
\tilde A^{\mathrm{fb}}(j,i,c)
=
A^{\mathrm{fb}}(h_j,i,c)
\le
A^{\mathrm{switch}}(j,i,c).
\]
Indeed, in the feedback sub-game, fixing the Seeker's
strategy to route $r_j\in\mathcal I(h_j)$ is always feasible.
Therefore, minimizing over all mixed strategies on
$\mathcal I(h_j)$ cannot yield a larger value than fixing the
continuation to $r_j$. Consequently,
\[
v^{\mathrm{fb}}(c)\le v^{\mathrm{switch}}(c).
\]
Let
\[
\delta =
\max_{j,i}
\left|
A^{\mathrm{switch}}(j,i,c)
-
\tilde A^{\mathrm{fb}}(j,i,c)
\right|.
\]
By standard monotonicity and translation properties of zero-sum
matrix games (see Lemma 4.3.4 in \cite{filar2012competitive}),
it follows that
\[
v^{\mathrm{fb}}(c)
\le
v^{\mathrm{switch}}(c)
\le
v^{\mathrm{fb}}(c) +\delta,
\]
which proves Eq. \eqref{eq:value_bound_fb_switch}.

\end{proof}

\begin{remark}
The bound in Proposition~\ref{lem:fb_vs_switch} depends on the switching cost
$c$ through the matrix $A^{\mathrm{switch}}(c)$. In general, increasing $c$ reduces the Hider's flexibility at the reveal stage and can therefore reduce the gap between $v^{\mathrm{switch}}$ and $v^{\mathrm{fb}}$. However, a large switching cost does not by itself imply that the two values coincide. Even when switching is too expensive to be used, the seeker-aware model may
still yield a smaller value because, after the reveal, the Seeker can select a prefix-consistent route that reaches the current hiding location more efficiently than the originally committed route. The two values coincide only in the special case where, for every revealed prefix and admissible hiding location, the committed route is already as favorable to the Seeker as any prefix-consistent alternative.
\end{remark}

%

\section{Numerical Examples}

We illustrate the restricted-seeker model on a three-location instance. The locations are numbered as $i = \{1,\,2,\,3\}$.
Let the candidate locations be
\begin{align}
1 \equiv [1,0], \qquad
2 \equiv [2,1], \qquad
3 \equiv [2,-1],
\end{align}
with the Seeker starting from the origin $O=[0,0]$.
All $M=3! = 6$ permutations of $\{1,2,3\}$ are admissible routes:
\begin{align*}
r_1&=(1,2,3),\quad r_2=(1,3,2),\quad r_3=(2,1,3),\\
r_4&=(2,3,1),\quad r_5=(3,1,2),\quad r_6=(3,2,1).
\end{align*}

The payoff is the positive cumulative travel distance accrued by the Seeker until the treasure is found. Hence the Hider maximizes and the Seeker minimizes.

\subsubsection{Base payoff matrix}
The baseline matrix $A \in \mathbb{R}^{M\times N}$, defined in Eq. \eqref{eq:baseA}, has rows indexed by routes and columns indexed by hiding locations:
\begin{align}
\label{eq:numbaseA}
A =
\begin{bmatrix}
1.0000 & 2.4142 & 4.4142\\
1.0000 & 4.4142 & 2.4142\\
3.6503 & 2.2361 & 5.0645\\
5.6503 & 2.2361 & 4.2361\\
3.6503 & 5.0645 & 2.2361\\
5.6503 & 4.2361 & 2.2361
\end{bmatrix}.
\end{align}
Each entry $A(j,\, i)$ equals the cumulative distance traveled along route $r_j$ until location $i$ is reached. The value corresponding to this base payoff matrix is obtained as $v^{\mathrm{base}} = 3.3251$.

\subsection{Restricted seeker model}
Fix the reveal time $t_{\mathrm{reveal}}=1$.
For route $r_j$, the revealed prefix is $h_j=(r_{j,1})$.
Therefore,
\begin{align}
\mathcal V(h_j)=\{r_{j,1}\}, \qquad
\mathcal U(h_j)=\mathcal N \setminus \mathcal V(h_j).
\end{align}
If $i \in \mathcal V(h_j)$, the game terminates at $t_{\mathrm{reveal}}$.
If $i \in \mathcal U(h_j)$, the Hider may either stay at $i$ or switch once to the other unvisited location $\hat{i} \in \mathcal U(h_j)$ by paying cost $c\ge 0$. Since $N=3$ and $t_{\mathrm{reveal}}=1$, each admissible set $\mathcal U(h_j)$ contains exactly two elements.

\subsubsection{Hider's switching payoff matrix}
Under the restricted-seeker model, the Seeker remains committed to route $r_j$. Thus $A^{\mathrm{switch}}$ is obtained according to Algorithm \ref{alg:Aswitch}.
For $c=1$, the resulting switching matrix is
\begin{align}
A^{\mathrm{switch}} =
\begin{bmatrix}
1.0000 & 3.4142 & 4.4142\\
1.0000 & 4.4142 & 3.4142\\
4.0645 & 2.2361 & 5.0645\\
5.6503 & 2.2361 & 4.6503\\
4.0645 & 5.0645 & 2.2361\\
5.6503 & 4.6503 & 2.2361
\end{bmatrix}.
\end{align}
\begin{figure}[!t]
    \centering   \includegraphics[width=\linewidth]{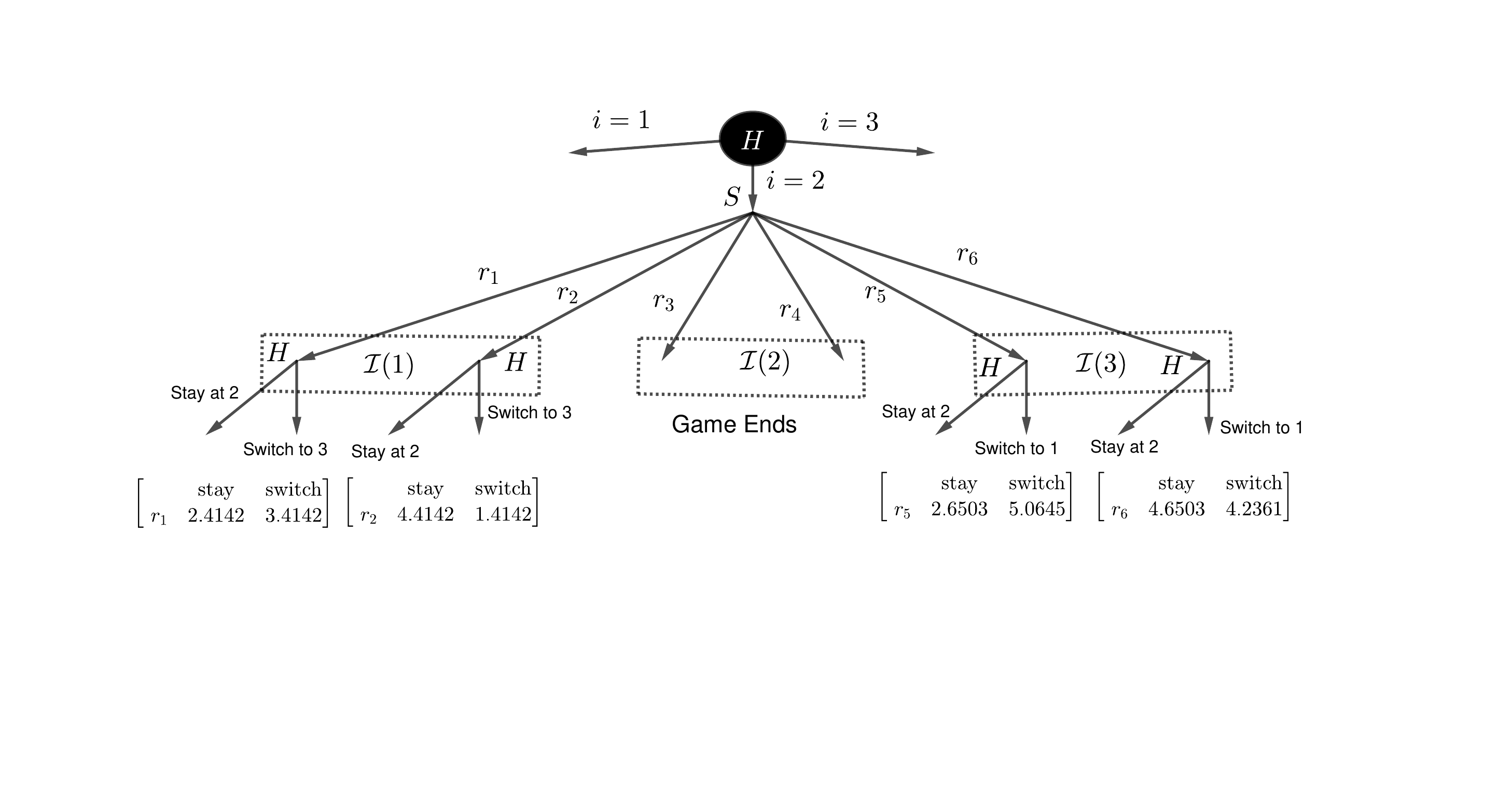}
   \caption{Restricted seeker model game tree for $N=3$ with $t_{\mathrm{reveal}}=1$ and $c=1$. After the first visit, $h_1=\{1\}$, so $\mathcal{U}(h_1)=\{2,3\}$. The Hider compares staying at 2 with switching to 3 (cost $c$), yielding $A^{\mathrm{switch}}(1,2)=3.4142$.}
    \label{fig:num1}
\end{figure}
As a concrete entry, consider $r_1=(1,2,3)$ and initial location $i=2$, as shown in Fig. \ref{fig:num1}.
At reveal time the visited set is $\{1\}$ and the unvisited set is $\{2,3\}$, so the Hider compares
\begin{align*}
\text{stay: } & A(1,2)=2.4142,\\
\text{switch to }3: ~~& A(1,3)-c=4.4142-1=3.4142.
\end{align*}
Since switching yields the larger value,
hence $A^{\mathrm{switch}}(1,2)=3.4142$.

The restricted-seeker value at $t_{\mathrm{reveal}}=1$ is
\begin{equation}
v^{\mathrm{switch}} = 
\max_{z\in\Delta_N}
\min_{y\in\Delta_M}
y^\top A^{\mathrm{switch}} z.
\end{equation}

For $c=1$, solving the above gives
\begin{align*}
v^{\mathrm{switch}} &\approx 3.6462 > v^{\mathrm{base}}\\
y^\star &\approx [0,\, 0.4310,\, 0,\, 0.3738,\, 0,\, 0.1953]^\top,\\
z^\star &\approx [0.0920,\, 0.4540,\, 0.4540]^\top.
\end{align*}

\smallskip

\subsubsection{Critical switching cost}

Fix $t_{\mathrm{reveal}}=1$ and route $r_1=(1,2,3)$.
Then
$
h_1=(1),
\qquad
\mathcal U(h_1)=\{2,3\}.
$
Consider the initial hiding location $i=2$. To evaluate
Eq.~\eqref{eq:cstar_restricted}, we compute the route-wise reduced
payoffs along $r_1$ with zero switching cost. From
Eq.~\eqref{eq:Ared_restricted},
\begin{align*}
A_1^{\mathrm{red}}(h_1,2,2,0)
&=
A(1,2)-A(1,1)\\
&=
2.4142-1.0000
=
1.4142,
\\[2mm]
A_1^{\mathrm{red}}(h_1,2,3,0)
&=
A(1,3)-A(1,1)\\
&=
4.4142-1.0000
=
3.4142.
\end{align*}

Therefore,
\begin{align*}
c^\star(1,2)
&=
\max_{\hat i\in\{2,3\}}
\Big(
A_1^{\mathrm{red}}(h_1,2,\hat i,0)
-
A_1^{\mathrm{red}}(h_1,2,2,0)
\Big)
\\
&=
\max\{0,\;3.4142-1.4142\}
\\
&=
{2.0000}.
\end{align*}
Equivalently, using the simplified expression,
\[
c^\star(1,2)
=
A(1,3)-A(1,2)
=
4.4142-2.4142
=
2.0000.
\]







{
\begin{table}[!t]
\centering
\caption{Route-wise thresholds $c^\star(j,i)$ for the three-location
instance with $t_{\mathrm{reveal}}=1$, computed using
Eq.~\eqref{eq:cstar_restricted}. ``--'' indicates
$i\in\mathcal V(h_j)$.}
\label{tab:cstar_all}
\begin{tabular}{c | c | c | c}
\hline
Route $r_j$
& $c^\star(j,1)$
& $c^\star(j,2)$
& $c^\star(j,3)$\\
\hline
$r_1=(1,2,3)$ & --     & 2.0000 & 0      \\
$r_2=(1,3,2)$ & --     & 0      & 2.0000 \\
$r_3=(2,1,3)$ & 1.4142 & --     & 0      \\
$r_4=(2,3,1)$ & 0      & --     & 1.4142 \\
$r_5=(3,1,2)$ & 1.4142 & 0      & --     \\
$r_6=(3,2,1)$ & 0      & 1.4142 & --     \\
\hline
\end{tabular}
\end{table}}
The global threshold $c^\star_{\mathrm{global}}$ is defined in
Eq.~\eqref{eq:cstar_global_restricted} as the maximum free switching advantage
over all admissible route–location pairs. 
For this instance, the largest gain occurs at prefix $h=1$ for $i=3$, yielding $c^\star_{\mathrm{global}}=2.0000$.

\smallskip

{
\subsection{Expected VOI}
We evaluate the value-of-information using
Definition~\ref{def:voi}. For $t_{\mathrm{reveal}}=1$ and $N=3$,
the admissible set $\mathcal U(h_j)$ contains exactly two nodes.
For each supported route $r_j$ and initial location
$i\in\mathcal U(h_j)$, the route-level value-of-information is
\begin{align}
\mathrm{VOI}(r_j,i;c)
={}&
\max_{\hat i\in\mathcal U(h_j)}
\min_{k\in\mathcal I^\star(h_j)}
A_k^{\mathrm{red}}(h_j,i,\hat i,c)
\nonumber\\
&-
\min_{k\in\mathcal I^\star(h_j)}
A_k^{\mathrm{red}}(h_j,i,i,c).
\end{align}
Thus, the Hider compares the worst-case payoff obtained from staying
at $i$ with that obtained from switching to the single alternative
location, over all supported routes consistent with the revealed
prefix.}

{
For this instance, the baseline Seeker strategy has
$
\operatorname{supp}(y^\star)=\{r_1,\,r_2,\,r_4,\,r_6\}.
$
At $c=1$, the corresponding route-level VOI values are
\begin{align*}
\mathrm{VOI}(r_1,:)
&=
[\,0,\;0,\;0\,],\\
\mathrm{VOI}(r_2,:)
&=
[\,0,\;0,\;0\,],\\
\mathrm{VOI}(r_4,:)
&=
[\,0,\;0,\;0.4142\,],\\
\mathrm{VOI}(r_6,:)
&=
[\,0,\;0,\;0.4142\,].
\end{align*}
Consequently, the location-wise worst-case quantity in
Eq.~\eqref{eq:barvoi_restricted}, evaluated over the supported routes, is
\[
\overline{\mathrm{VOI}}
=
\begin{bmatrix}
0 & 0 & 0
\end{bmatrix}^{\top} \implies 
\mathbb E[\mathrm{VOI}]
=
(z^\star)^\top\overline{\mathrm{VOI}}
= 0
\]}



\smallskip

\subsubsection{Illustration with $N=6$}

We consider an instance with $N=6$ candidate locations
\begin{align*}
p_1&=\begin{bmatrix}1 & 1\end{bmatrix}^\top,\;
p_2=\begin{bmatrix}2 & 2\end{bmatrix}^\top,\;
p_3=\begin{bmatrix}2 & 1\end{bmatrix}^\top,\\
p_4&=\begin{bmatrix}5 & 1\end{bmatrix}^\top,\;
p_5=\begin{bmatrix}3 & 5\end{bmatrix}^\top,\;
p_6=\begin{bmatrix}5 & 3\end{bmatrix}^\top ,
\end{align*}
as shown in Fig.~\ref{fig:pointsN6}, with the Seeker starting from
$O=[0,0]^\top$. The route set consists of all permutations of
$\{1,\dots,6\}$, so $M=6!=720$.

The baseline payoff matrix $A\in\mathbb R^{M\times N}$ is computed using
Eq.~\eqref{eq:baseA}. We fix $t_{\mathrm{reveal}}=1$ and $c=1$.
For each realized route $r_j$, the prefix is $h_j=(r_{j,1})$,
so $\mathcal V(h_j)=\{r_{j,1}\}$ and
$\mathcal U(h_j)=\mathcal N\setminus\mathcal V(h_j)$.
Using Algorithm~\ref{alg:Aswitch}, we construct
$A^{\mathrm{switch}}\in\mathbb R^{M\times N}$.
{For this instance, we obtain
\[
v^{\mathrm{base}}\approx 8.0276,
\qquad
v^{\mathrm{switch}}\approx 10.7737.
\]
The difference in these values demonstrates the performance gain enabled by the reveal-stage relocation option.}

\begin{figure}
    \centering
    \includegraphics[width=0.75\linewidth]{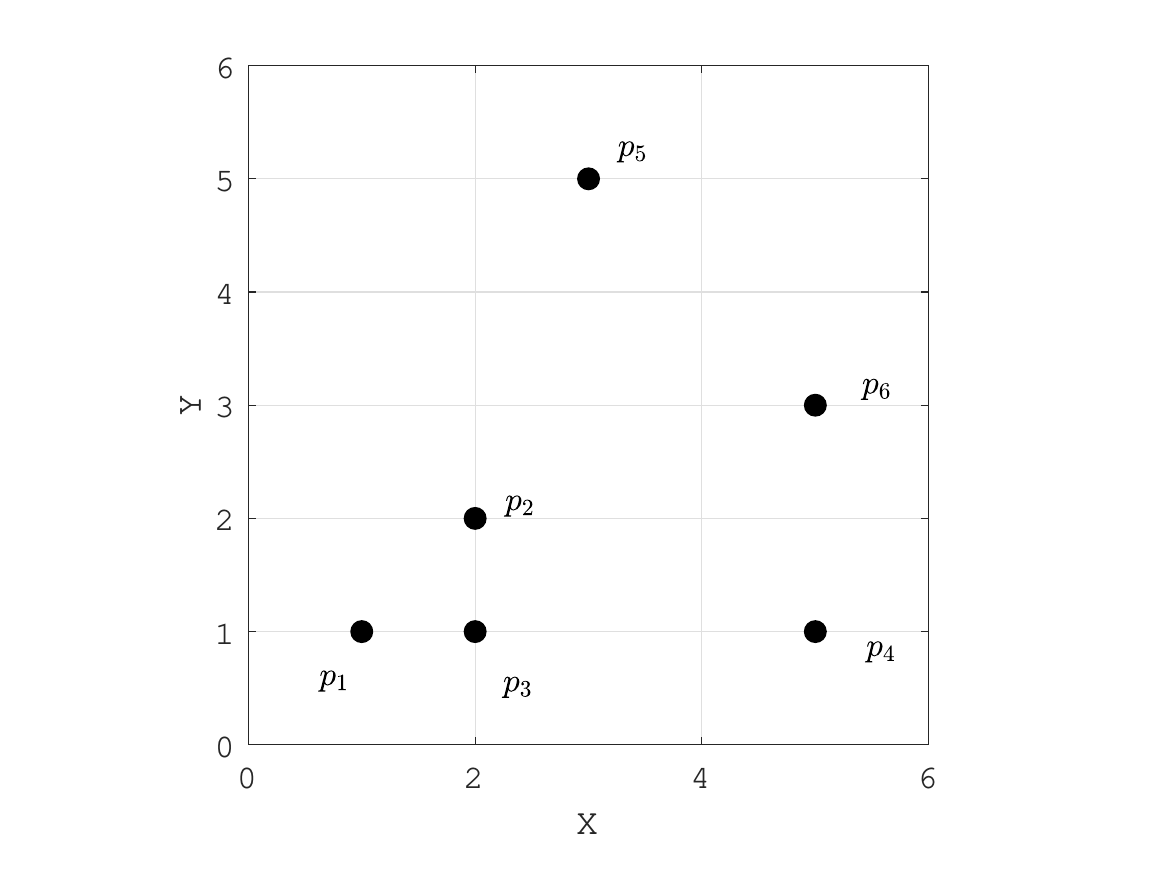}
    \caption{Six-location configuration used in the numerical example, with the Seeker starting at the origin.}
    \label{fig:pointsN6}
\end{figure}

Given $A^{\mathrm{switch}}$, the route-level value-of-information
$\mathrm{VOI}(r_j,i)$ is computed directly from
Definition~\ref{def:voi} by enumerating admissible
$\hat i\in\mathcal U(h_j)$ for each $(r_j,i)$.
We then evaluate $\mathbb{E}[\mathrm{VOI}]$
under equilibrium strategies and repeat this computation over
varying switching costs $c$ and reveal times $t_{\mathrm{reveal}}$.
\begin{figure}
    \centering
    \includegraphics[width=0.9\linewidth]{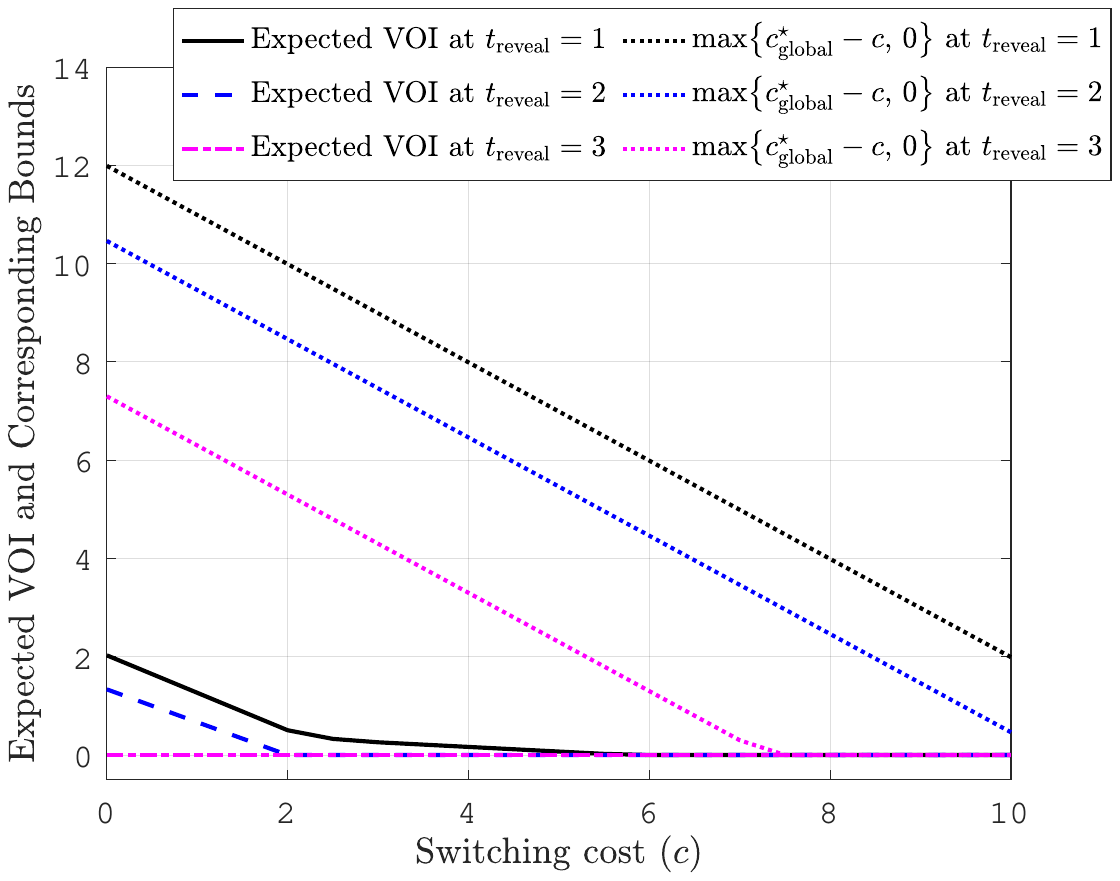}
   \caption{Expected value-of-information $\mathbb{E}[\mathrm{VOI}]$ versus switching cost $c$ for different reveal times $t_{\mathrm{reveal}}$. The decay in $c$ and downward shift with increasing $t_{\mathrm{reveal}}$ are consistent with Theorem~\ref{lem:expvoi_bound} and Proposition~\ref{lem:treveal_mono}.}
    \label{fig:voiplot}
\end{figure}
Fig. \ref{fig:voiplot} shows that, for each fixed
$t_{\mathrm{reveal}}$, the expected VOI decreases with increasing $c$
and becomes zero once $c$ exceeds the corresponding threshold,
in agreement with Theorem~\ref{lem:expvoi_bound}.
Moreover, the curves shift downward as $t_{\mathrm{reveal}}$ increases and
later reveals a reduction in the admissible switching set and hence diminishes
the informational advantage, consistent with
Lemma~\ref{lem:treveal_mono}.
In this instance, $t_{\mathrm{reveal}}=1$ yields the largest
information gain and exhibits a near-linear decay in $c$,
whereas for $t_{\mathrm{reveal}}> 2$ the expected VOI is zero over the plotted range.

\subsection{Seeker-aware Model}

We now consider the same three-location instance under the seeker-aware
(feedback) model with $t_{\mathrm{reveal}}=1$. The geometry, routes,
and baseline matrix $A$ remain unchanged from the previous subsection.

\subsubsection{Feedback payoff matrix}
At $t_{\mathrm{reveal}}=1$, the revealed prefix is $h_j=r_{j,1}$.
Routes sharing the same first node form the information sets
\begin{align}
\mathcal I(1)=\{r_1,r_2\},~
\mathcal I(2)=\{r_3,r_4\},~
\mathcal I(3)=\{r_5,r_6\}.
\end{align}
{
In the feedback model, for each revealed prefix $h$ and initial hiding location
$i\in\mathcal U(h)$, we construct the corresponding continuation
payoff matrix over the prefix-consistent routes in $\mathcal I(h)$
and the admissible post-reveal locations in $\mathcal U(h)$.
Accordingly, the value of the
corresponding zero-sum sub-game determines the entry
$A^{\mathrm{fb}}(h,i)$.  If $i\in\mathcal V(h)$, the game has already
terminated at the reveal stage and the corresponding entry is given by
the cumulative payoff $A(j,i)$. For $c=1$, this yields
\begin{align}
A^{\mathrm{fb}}
=
\begin{bmatrix}
1 & 2.9142 & 2.9142\\
3.9432 & 2.2361 & 4.3574\\
3.9432 & 4.3574 & 2.2361
\end{bmatrix},
\end{align}
where rows correspond to prefixes $h=1,2,3$ and columns correspond
to hiding locations $i=1,2,3$.}

\begin{figure}
    \centering
    \includegraphics[width=\linewidth]{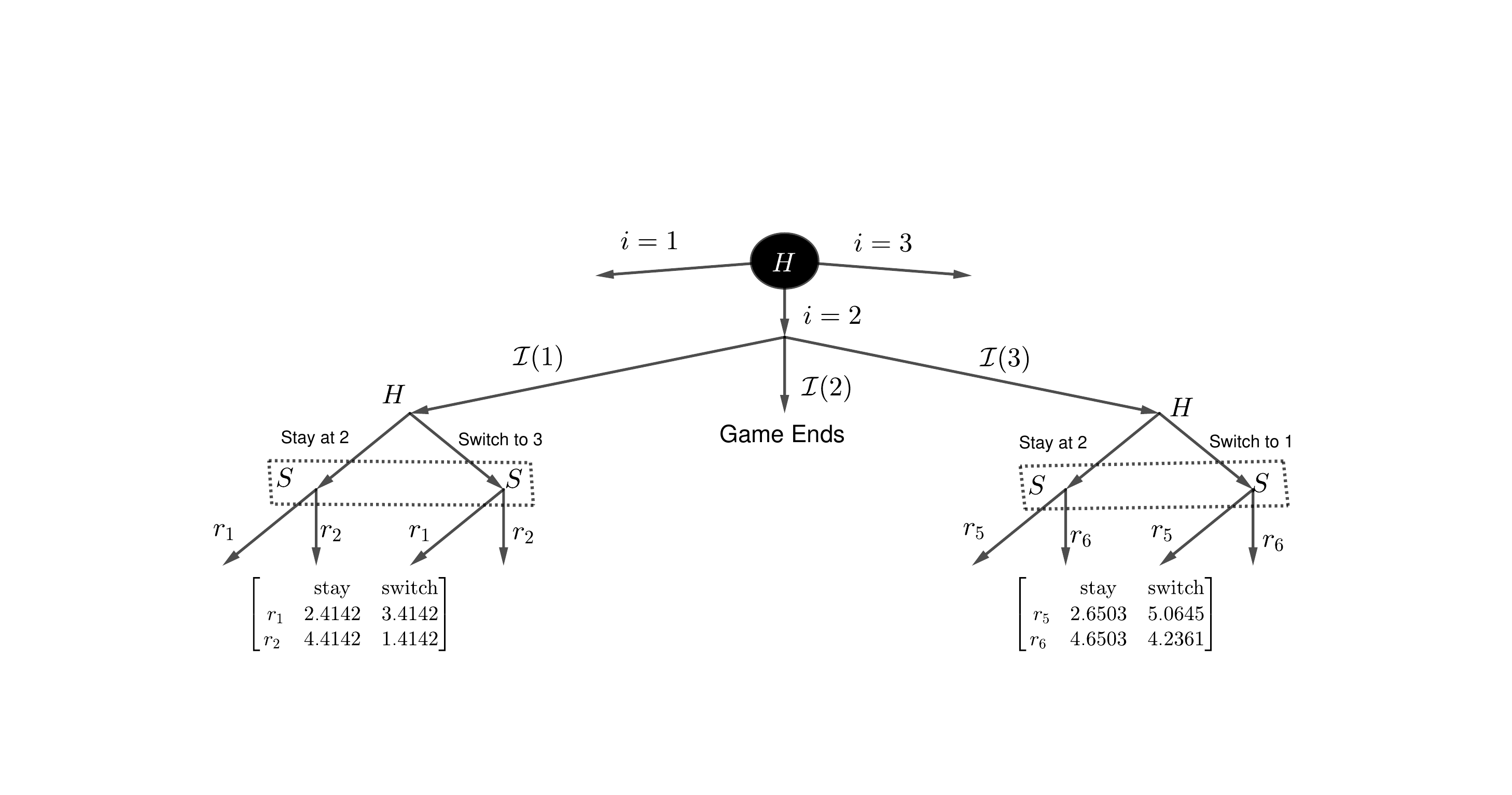}
    \caption{{Feedback-model game tree at $t_{\mathrm{reveal}}=1$, $c=1$,
for prefix $h=1$ and $i=2$. Solving the zero-sum sub-game
over the prefix-consistent routes and admissible stay/switch actions
yields $A^{\mathrm{fb}}(1,2)=2.9142
<A^{\mathrm{switch}}(1,2)=3.4142$.}}
    \label{fig:num2}
\end{figure}
As an illustration, consider prefix $h=1$ (routes $r_1$ and $r_2$)
and initial location $i=2$.
The Hider compares staying at $2$ with switching to $3$
while paying $c=1$. {The corresponding
sub-game is formed using the prefix-consistent routes
$r_1$ and $r_2$ together with these two admissible Hider actions.
Solving this zero-sum sub-game for $c=1$ gives
$
A^{\mathrm{fb}}(1,2)=2.9142
<
A^{\mathrm{switch}}(1,2)=3.4142.
$}
To compare directly with the restricted matrix
$A^{\mathrm{switch}}\in\mathbb R^{M\times N}$,
we evaluate $A^{\mathrm{fb}}$ at the prefix generated by each route and define
\begin{align}
\tilde A^{\mathrm{fb}}(j,i)
:=
A^{\mathrm{fb}}(h_j,i),
\qquad j=1,\dots,6.
\end{align}
Thus $\tilde A^{\mathrm{fb}}\in\mathbb R^{M\times N}$,
and routes sharing the same first visited node have identical rows.
For $c=1$, this gives
\begin{align}
\tilde A^{\mathrm{fb}}
=
\begin{bmatrix}
1 & 2.9142 & 2.9142\\
1 & 2.9142 & 2.9142\\
3.9432 & 2.2361 & 4.3574\\
3.9432 & 2.2361 & 4.3574\\
3.9432 & 4.3574 & 2.2361\\
3.9432 & 4.3574 & 2.2361
\end{bmatrix}.
\end{align}

The feedback value is computed from Eq.~\eqref{eq:v_fb}.
For $c=1$, we obtain
\begin{align*}
v^{\mathrm{fb}} &= 2.9142,\\
\bar{y}^{\star}_{\mathrm{fb}} &\approx [1,\,0,\,0]^\top,\\
z^{\star}_{\mathrm{fb}} &\approx [0,\,0.3197,\,0.6803]^\top.
\end{align*}

Since $\tilde A^{\mathrm{fb}}(j,i)\le A^{\mathrm{switch}}(j,i)$
for all $j,i$, Proposition~\ref{lem:fb_vs_switch} implies
$
v^{\mathrm{fb}}\le v^{\mathrm{switch}},
$
which is consistent with the theoretical result established earlier.

\subsubsection{Verification of the value bound}

We next verify Proposition~\ref{lem:fb_vs_switch} for the
three-location instance with $t_{\mathrm{reveal}}=1$ and $c=1$.
Using the matrices $A^{\mathrm{switch}}$ and $\tilde A^{\mathrm{fb}}$
computed above, we obtain
\begin{align}
\left|A^{\mathrm{switch}}-\tilde A^{\mathrm{fb}}\right|
=
\begin{bmatrix}
0 & 0.5000 & 1.5000\\
0 & 1.5000 & 0.5000\\
0.1213 & 0 & 0.7071\\
1.7071 & 0 & 0.2929\\
0.1213 & 0.7071 & 0\\
1.7071 & 0.2929 & 0
\end{bmatrix}.
\end{align}
Hence,
\begin{align}
\delta
:=
\max_{j,i}
\left|
A^{\mathrm{switch}}(j,i)-\tilde A^{\mathrm{fb}}(j,i)
\right|
=
1.7071.
\end{align}
The maximum is attained at $(j,i)=(4,1)$ and $(j,i)=(6,1)$.

From the numerical computations above,
\begin{align*}
v^{\mathrm{switch}}
&= 3.6462
\le
2.9142 + 1.7071
= 4.6213.
\end{align*}
Thus, the numerical values satisfy the bound
\begin{align*}
v^{\mathrm{fb}}
\le
v^{\mathrm{switch}}
\le
v^{\mathrm{fb}}+\delta,
\end{align*}
which is consistent with Proposition~\ref{lem:fb_vs_switch}.

We also examine the behavior of the bound for large switching cost.
For sufficiently large $c$ (for example $c=100$), switching is never
beneficial for the Hider, so the restricted model reduces to the
baseline game and
\[
v^{\mathrm{switch}} = v^{\mathrm{base}} = 3.3251.
\]
{In contrast, for sufficiently large $c$, switching is suboptimal in
each feedback subgame, so the Hider optimally remains at
the initial hiding location. The continuation-game value therefore
reduces to the minimum payoff over the prefix-consistent routes,
yielding}
$
v^{\mathrm{fb}} = 2.4142.
$
The maximum entrywise difference between
$A^{\mathrm{switch}}$ and $\tilde A^{\mathrm{fb}}$ is
$
\delta = 2,
$
and remains unchanged for larger values of $c$.
Therefore the bound in Proposition~\ref{lem:fb_vs_switch}
\[
v^{\mathrm{fb}}
\le
v^{\mathrm{switch}}
\le
v^{\mathrm{fb}}+\delta
\]
continues to hold, with $v^{\mathrm{switch}} = v^{\mathrm{base}}$.

\subsection{Discussions}
The numerical study illustrates how partial route information alters the strategic evolution of the game. When switching is inexpensive, the Hider can exploit the revealed prefix to relocate to a more favorable unvisited location, leading to a strictly larger payoff under the restricted model. As the switching cost increases, this flexibility diminishes, and the value-of-information decreases accordingly. Once $c \ge c^\star_{\mathrm{global}}$, relocation is no longer advantageous, and the game reduces to the baseline case. The timing of the reveal also plays a critical role. An earlier reveal enlarges the admissible set $\mathcal U(h)$ and provides greater opportunity for improvement, whereas a later reveal restricts feasible relocation and limits the attainable gain.

Under the seeker-aware model, the Seeker anticipates this relocation and minimizes over feasible continuations within each information set. This reduces the Hider's benefit relative to the restricted case. Overall, strategic awareness and the timing of revelation jointly determine the quantitative impact of partial information on equilibrium outcomes.

\section{Conclusion}

This paper {introduced} a variation of the classic hide-and-seek game in which the Seeker’s route is partially revealed during execution. After observing the revealed prefix, the Hider may relocate once by paying a switching cost. We quantified the VOI created by this partial information under restricted seeker model. In the restricted model, the Seeker commits to a route in advance. We also analysed the seeker-aware model, in which, the Seeker may choose a prefix-consistent route after the reveal.
Our analysis characterized how the expected VOI depends on the switching cost and the reveal time. In particular, we showed that the informational advantage decreases as the switching cost increases and as the reveal occurs later along the route. Numerical examples illustrated these trends and demonstrated how partial information can change the equilibrium outcomes of the game.

Future work will focus on improving the computational scalability of the framework for larger environments and extending the model to more general sensing, relocation, and multi-agent settings.
  \bibliographystyle{ieeetr}
 \bibliography{cdc-ref}
\end{document}